\newcommand{\cc}{\textsc{CongestedClique}}
\newcommand{\congest}{\textsc{Congest}}
\newcommand{\local}{\textsc{Local}}
\newtheorem*{tempTheorem}{Theorem}
\title{Congested Clique Counting for Local Gibbs Distributions}
\author{Joshua Z. Sobel}{Department of Computer Science, University of Iowa, Iowa City, Iowa, USA}{joshua-sobel@uiowa.edu}{https://orcid.org/0009-0004-7482-0754}{}
\authorrunning{J.\,Z. Sobel} 
\keywords{Distributed Sampling, Approximate Counting, Markov Chains, Gibbs Distributions}
\begin{document}

\maketitle

\begin{abstract}
There are well established reductions between combinatorial sampling and counting problems (Jerrum, Valiant, Vazirani TCS 1986).  Building off of a very recent parallel algorithm utilizing this connection (Liu, Yin, Zhang arxiv 2024), we demonstrate the first approximate counting algorithm in the \cc{} for a wide range of problems.  Most interestingly, we present an algorithm for approximating the number of $q$-colorings of a graph within $\epsilon$-multiplicative error, when $q>\alpha\Delta$ for any constant $\alpha>2$, in $\Tilde{O}\big(\frac{n^{1/3}}{\epsilon^2}\big)$ rounds.  More generally, we achieve a runtime of $\Tilde{O}\big(\frac{n^{1/3}}{\epsilon^2}\big)$ rounds for approximating the partition function of Gibbs distributions defined over graphs when simple locality and fast mixing conditions hold.  Gibbs distributions are widely used in fields such as machine learning and statistical physics.  We obtain our result by providing an algorithm to draw $n$ random samples from a distributed Markov chain in parallel, using similar ideas to triangle counting (Dolev, Lenzen, Peled DISC 2012) and semiring matrix multiplication (Censor-Hillel, Kaski, Korhonen, Lenzen, Paz, Suomela PODC 2015).  Aside from counting problems, this result may be interesting for other applications requiring a large number of samples.  In the special case of estimating the partition function of the hardcore model, also known as counting weighted independent sets, we can do even better and achieve an $\Tilde{O}\big(\frac{1}{\epsilon^2}\big)$ round algorithm, when the fugacity $\lambda \leq \frac{\alpha}{\Delta-1}$, where $\alpha$ is an arbitrary constant less than $1$.
\end{abstract}

\section{Introduction}
In a pioneering work, Valiant proposed the computational complexity class \#P \cite{ValiantSharpP}.  Problems in \#P are the counting variants of decision problems in NP; for example, counting the number of satisfying 3-SAT assignments or counting the number of proper vertex colorings of a graph using at most $q$ colors ($q$-colorings).  Unfortunately, Valiant established that even when a decision problem is in $P$, such as determining whether a bipartite graph has a perfect matching, the corresponding counting problem can be NP-hard.  Thus for most counting problems, we need to settle for $\epsilon$ multiplicative approximation rather than exact counting.  The goal is to estimate a true count $X$ by $\hat{X}$, such that $(1-\epsilon)X\leq\hat{X}\leq (1+\epsilon)X$ with probability at least $\frac{3}{4}$.\footnote{By the standard median trick, $3/4$ can be boosted to any constant less than one.}  

The typical method for approximate counting involves drawing random samples that are solutions to the corresponding search problem.  In other words, we can count the number of colorings of a graph by sampling random colorings.  This approach dates back to Jerrum, Valiant, and Vazirani \cite{JVV} showing that for a wide class of problems, most notably joint distributions on a graph (e.g. colorings and matchings), fully polynomial-time randomized approximate counting is inter-reducible with polynomial-time approximate sampling.\footnote{The precise result is slightly more complex and requires sampling from both the original distribution and derived conditional distributions.}  The initial reduction from counting to sampling \cite{JVV} requires drawing a very large number of samples; this has been improved over several subsequent works \cite{EdgeRemoveCounting, SimAnneal1, SimAnneal2, ParGibCount}.  In particular, when considering Gibbs distributions defined over graphs with $n$ vertices and meeting mild conditions, Štefankovič, Vempala, and Vigoda \cite{SimAnneal2} showed that counting can be accomplished by taking $\Tilde{O}\big(\frac{n}{\epsilon^2}\big)$ samples.\footnote{As we will explain later, for an arbitrary Gibbs distribution, counting refers to estimating the partition function.  This is a generalization of counting the size of a set.}  Gibbs distributions originated in physics, and generalize many problems such as the uniform distribution of $q$-colorings.  In computer science, they have found uses in many areas such as machine learning \cite{MachineLearning}, image processing \cite{ImageProcessing}, and even computational biology \cite{ComputationalBiology}.

Unfortunately, the algorithm from \cite{SimAnneal2} cannot be fully parallelized, as samples drawn later depend on samples drawn earlier.  It does seem possible to obtain an $\Tilde{O}(\frac{\sqrt{n}}{\epsilon^2})$ round \cc{} implementation of this algorithm; however, our results show we can do much better.  Very recently, Liu, Yin, and Zhang \cite{ParGibCount} also showed that for many Gibbs distributions defined over graphs counting can be done using $\Tilde{O}\big(\frac{n}{\epsilon^2}\big)$ total samples, but in their approach every sample can be taken in parallel.  This implies that counting in the \cc{} reduces to quickly drawing $\Tilde{O}\big(\frac{n}{\epsilon^2}\big)$ samples.  From existing work, \cite{WhatCanBeSampled,FischerGhaffari,FengHayesYin}, we know that for many Gibbs distributions drawing a single sample within total variation distance error of $\delta$ can be done in $O(\log \frac{n}{\delta})$ \cc{} rounds, using the \textit{distributed Metropolis-Hastings} Markov chain.  For Gibbs distributions that we call \textit{local}, we identify a novel structural similarity between simulating one transition of $n$ instances of the Markov chain and semiring matrix multiplication.  In particular, utilizing the approach of Dolev, Lenzen, and Peled \cite{TriangleCounting} for triangle counting and Censor-Hillel, Kaski, Korhonen, Lenzen, Paz, and Suomela \cite{MatrixMult} for semiring matrix multiplication, we can simulate one transition of $n$ instances of the Markov chain in $O(n^{1/3})$ rounds.  Altogether, this shows that for local Gibbs distributions we can count in $\Tilde{O}\big(\frac{n^{1/3}}{\epsilon^2}\big)$ \cc{} rounds when the Markov chain has an $O(\log\frac{n}{\delta})$ mixing time.

\subsection{Main Results}
Our central contribution is an algorithm to draw many samples at once from the distributed Metropolis-Hastings Markov chain.  Beyond counting, this result may be useful in other algorithms where several samples are needed.

\begin{restatable}{theorem}{mainresult}\label{simulationTheorem}One transition can be simulated for $n$ instances of the distributed Metropolis-Hastings chain in $O(n^{1/3})$ rounds.
\end{restatable}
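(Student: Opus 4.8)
The plan is to recast one transition of all $n$ chains as a single semiring-matrix-product-like computation and then execute it with the three-dimensional partitioning scheme underlying fast triangle counting \cite{TriangleCounting} and semiring matrix multiplication \cite{MatrixMult}. I would adopt the natural layout in which, at the start of the transition, machine $v$ holds the spin of vertex $v$ in every one of the $n$ configurations, i.e.\ $\sigma_1(v),\dots,\sigma_n(v)$, together with the list of neighbors of $v$ in $G$. First I would unfold what a transition requires: in the distributed Metropolis--Hastings chain every vertex locally draws an activity bit and, if active, a proposed spin, so machine $v$ can sample $(a_{v,i},c_{v,i})$ for all $i$ by itself; the only non-local ingredient is the acceptance probability of $v$ in instance $i$, which for a \emph{local} Gibbs distribution is a function of the local data of $v$ and of an aggregate
\[
  R_{v,i}=\bigoplus_{w\sim v}\psi\big((\sigma_i(v),c_{v,i},a_{v,i}),(\sigma_i(w),c_{w,i},a_{w,i})\big)
\]
of bounded, $O(\log n)$-bit size (for $q$-colorings, say, $R_{v,i}$ just records whether some neighbor of $v$ uses color $c_{v,i}$ in $\sigma_i$). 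So the whole transition reduces to computing the $n\times n$ table $(R_{v,i})$ and then letting each machine $v$ finish its $n$ instances with no further communication.

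Next I would observe that computing $(R_{v,i})$ has exactly the shape of the semiring product $A\cdot S$, with $A$ the adjacency matrix of $G$ and $S_{w,i}$ packing $(\sigma_i(w),c_{w,i},a_{w,i})$; it is not literally an instance of \cite{MatrixMult} because the summand $\psi$ also reads the data of the output coordinate $(v,i)$, but it is computed by the same routine. Partition $V$ into $t:=\lceil n^{1/3}\rceil$ groups $V_1,\dots,V_t$ of size $O(n^{2/3})$ and $[n]$ into $t$ groups $I_1,\dots,I_t$ of size $O(n^{2/3})$, and identify the $n=t^3$ machines with triples $(a,b,c)\in[t]^3$. Machine $(a,b,c)$ is made responsible for the partial aggregate over neighbors lying in $V_c$: I would have it gather the adjacency block $A[V_a\times V_c]$, the packed data $S_{w,i}$ for $w\in V_c,\,i\in I_b$, and the output-side data $(\sigma_i(v),c_{v,i},a_{v,i})$ for $v\in V_a,\,i\in I_b$. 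Each of these three pieces has $O(n^{4/3})$ words of $O(\log n)$ bits, and every source machine owns $O(n^{4/3})$ outgoing words, so by Lenzen's routing this gathering costs $O(n^{1/3})$ rounds; machine $(a,b,c)$ then computes $R^{(c)}_{v,i}=\bigoplus_{w\in V_c,\,w\sim v}\psi(\cdot,\cdot)$ for all $v\in V_a,\,i\in I_b$ locally.

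It remains to combine the partials and deliver them: I would route each $R^{(c)}_{v,i}$ from machine $(a,b,c)$ to machine $v$. Machine $(a,b,c)$ sends $O(n^{4/3})$ words, and machine $v$ receives the $t$ partial aggregates for each of its $n$ instances, i.e.\ $O(n^{4/3})$ words, so this is again $O(n^{1/3})$ rounds by Lenzen's routing. Machine $v$ then forms $R_{v,i}=\bigoplus_{c=1}^{t}R^{(c)}_{v,i}$ for every $i$, performs the accept/reject step, and overwrites $\sigma_i(v)$ where appropriate. The resulting layout coincides with the initial one, so transitions chain together, and summing the phases gives $O(n^{1/3})$ rounds.

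The step I expect to be delicate is the message-size accounting: the three-dimensional scheme runs in $O(n^{1/3})$ rounds only because every machine moves $O(n^{4/3})$ words, and this is precisely what the locality hypothesis buys, by compressing each per-neighbor contribution $\psi$ and each aggregate $R_{v,i}$ into $O(\log n)$ bits; without it one would be forced to ship whole neighborhood multisets, of total size $\Theta\big(\sum_v \deg(v)\cdot n\big)=\Theta(n^3)$ for dense $G$, costing $\Theta(n)$ rounds. A secondary wrinkle is that $\psi$ reads the output coordinate's own proposal and activity, so the scheme must explicitly route the $V_a\times I_b$ output-side data to machine $(a,b,c)$ rather than invoke \cite{MatrixMult} as a black box. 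Everything else — sampling the activity bits and proposals, the final local accept/reject, and the two balanced routing steps — is routine.
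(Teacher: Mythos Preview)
Your proposal is essentially the paper's own argument: the same three-dimensional cube partition, the same $O(n^{4/3})$ words per machine in each phase, and the same two invocations of Lenzen routing. The paper's description and yours differ only in notation.

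There is, however, one real omission. In the distributed Metropolis--Hastings chain, each edge $\{u,v\}$ is accepted by a \emph{single} random coin whose bias depends on $(X_u,X_v,\sigma_u,\sigma_v)$, and vertex $u$ updates only if \emph{that same} coin came up accepting for every incident edge. Your aggregate $R_{v,i}=\bigoplus_{w\sim v}\psi(\cdot,\cdot)$ treats $\psi$ as a deterministic function of the two endpoints' data, which leaves unspecified where the edge-level randomness lives and, more importantly, how the two endpoints agree on it. In your scheme the edge $\{u,w\}$ in instance $i$, with $u\in V_a$, $w\in V_c$, is processed both by machine $(a,b,c)$ (for $u$'s side) and by machine $(c,b,a)$ (for $w$'s side); if each generates its own coin, the chain you simulate is not the intended one, and the stationary-distribution and mixing guarantees from \cite{FengHayesYin,ExactDistSampl} no longer apply. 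The paper handles this explicitly by assigning the symmetric subcubes $Q[x,y,z]$ and $Q[y,x,z]$ to the \emph{same} machine, which then flips one coin per edge-chain pair and records the outcome in both subcubes. You should either adopt that pairing or spell out a shared-seed mechanism; once you do, the rest of your argument goes through unchanged.
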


As a corollary to the previous theorem and the results of \cite{ParGibCount}, under mild locality and mixing conditions which we will describe later, there is an $\Tilde{O}\big(\frac{n^{1/3}}{\epsilon^2}\big)$ round counting algorithm.  As the most interesting example, we obtain a fast algorithm for approximating the number of q-colorings of an input graph with degree $\Delta$.

\begin{restatable}{corollary}{colorcount}
    There is an algorithm for approximating the number of $q$-colorings of a graph in $\Tilde{O}(\frac{n^{1/3}}{\epsilon^2})$ rounds, when $q > \alpha\Delta$, for $\alpha>2$, within multiplicative error $\epsilon$, with probability at least $\frac{3}{4}$.
\end{restatable}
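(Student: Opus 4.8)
The plan is to obtain the corollary as a more or less immediate combination of \cref{simulationTheorem} with the counting-to-sampling reduction of Liu, Yin, and Zhang~\cite{ParGibCount}, leaving only two hypotheses to verify for the uniform distribution over $q$-colorings: that it is a \emph{local} Gibbs distribution in the sense defined later in the paper, and that its distributed Metropolis--Hastings chain mixes to total variation distance $\delta$ in $O(\log\frac{n}{\delta})$ rounds. Granting these, \cite{ParGibCount} reduces an $\epsilon$-multiplicative estimate of the number of $q$-colorings to drawing $\tilde{O}(n/\epsilon^2)$ samples (from the coloring distribution and from the conditional distributions appearing in their telescoping product), all of which may be drawn in parallel. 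We draw them in $\tilde{O}(1/\epsilon^2)$ batches of $n$ samples, run each batch for $O(\log\frac{n}{\delta})$ transitions to mix, and simulate each transition in $O(n^{1/3})$ \cc{} rounds using \cref{simulationTheorem}; taking $\delta$ inverse-polynomial in $n$ and $1/\epsilon$ lets a union bound over all $\tilde{O}(n/\epsilon^2)$ slightly imperfect samples be absorbed into the $\tilde{O}(\cdot)$, for a total of $\tilde{O}(n^{1/3}/\epsilon^2)$ rounds.

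To check locality, recall that the distributed Metropolis--Hastings chain for $q$-colorings has each vertex propose a uniformly random color and accept the recoloring unless it conflicts with a neighbor's current or simultaneously proposed color; the proposal distribution and the accept/reject decision at a vertex depend only on that vertex and its neighbors, which is precisely the structural condition the paper requires of a local distribution. The intermediate distributions in the reduction of~\cite{ParGibCount} arise by pinning a growing set of vertices to fixed colors -- equivalently, list-coloring instances with shrinking lists -- and these are Gibbs distributions of the same local form, with effective degree and list sizes only more favorable, so \cref{simulationTheorem} and the mixing bound apply to every factor verbatim.

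For mixing, the classical path-coupling argument shows single-site Glauber dynamics on $q$-colorings is rapidly mixing once $q>2\Delta$, and the distributed Metropolis--Hastings variants inherit this: by \cite{WhatCanBeSampled, FischerGhaffari, FengHayesYin}, whenever $q>\alpha\Delta$ for a constant $\alpha>2$ the chain reaches total variation distance $\delta$ from the uniform $q$-coloring distribution within $O(\log\frac{n}{\delta})$ rounds, and the same holds for the pinned and list-coloring instances above, which are only easier. Plugging $O(\log\frac{n}{\delta})$ transitions per batch into \cref{simulationTheorem} and multiplying by the $\tilde{O}(1/\epsilon^2)$ batches gives the claimed round complexity.

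I expect the main obstacle to be bookkeeping rather than any single calculation: one must confirm that every conditional distribution produced by the~\cite{ParGibCount} reduction satisfies the paper's locality definition and the $q>\alpha\Delta$ mixing hypothesis with a single common constant $\alpha>2$, that the $\tilde{O}(n/\epsilon^2)$ required samples can be organized into $\tilde{O}(1/\epsilon^2)$ batches of $n$ to which \cref{simulationTheorem} applies, and that the total accumulated total-variation error and failure probability across all samples stay within the constant error budget once $\delta$ is chosen polynomially small. All of the genuinely new algorithmic content lives in \cref{simulationTheorem} and in the cited mixing results, so the corollary amounts to verifying these hypotheses and composing the error bounds.
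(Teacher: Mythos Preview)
Your high-level plan (combine \cref{simulationTheorem}, the mixing bound, and the parallel counting-to-sampling reduction of \cite{ParGibCount}) is exactly right, and the paper's own proof is essentially just the restatement ``\colorcount*'' placed immediately after the general partition-function corollary. However, you have misidentified the intermediate distributions that \cite{ParGibCount} actually requires, and this creates a real gap.

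The reduction in \cite{ParGibCount}, as summarized in \cref{simulatedAnnealingTheorem}, is a simulated-annealing scheme: it asks for samples from the Gibbs distribution at a precomputed sequence of inverse temperatures $\beta$ ranging over $[0,\infty]$, \emph{not} from conditional distributions obtained by pinning vertices. (Pinning-style telescoping is the older JVV-type reduction, which is inherently sequential and is precisely what \cite{ParGibCount} avoids.) Concretely, for $q$-colorings the intermediate distributions are Potts models at finite $\beta>0$, i.e.\ distributions that put positive weight on \emph{improper} colorings with monochromatic edges penalized by $e^{-\beta}$ each. The list-coloring/pinned instances you analyze never arise.

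Because of this, the mixing hypothesis you verify is the wrong one. The results \cite{FischerGhaffari,FengHayesYin} that you invoke establish $O(\log\frac{n}{\delta})$ mixing only at $\beta=\infty$ (proper colorings). What the corollary actually needs is fast mixing of the distributed Metropolis--Hastings chain for the Potts model at \emph{every} $\beta\ge 0$, and the paper supplies exactly this as a separate theorem (the Potts mixing theorem in the appendix), by adapting the coupling of \cite{FengHayesYin} to positive temperature. Your proposal skips this step; once you replace ``pinned list-coloring instances'' by ``Potts at intermediate $\beta$'' and invoke the paper's Potts mixing theorem in place of the $\beta=\infty$ results, the rest of your bookkeeping (batching into $\tilde{O}(1/\epsilon^2)$ groups of $n$ samples, choosing $\delta$ inverse-polynomial, union bound) goes through exactly as you wrote.
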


In the special case of the hardcore model, we can do better.  The hardcore model is a distribution over the independent sets of a graph, taking a parameter $\lambda$ called the fugacity.  The probability of independent set $I$ is proportional to $\lambda^{|I|}$.  For $\lambda \leq \frac{\alpha}{\Delta-1}$, $\alpha < 1$, we can draw $n$ samples in $\Tilde{O}(\frac{1}{\epsilon^2})$ rounds.  Counting in this model refers to estimating the partition function, $\sum_I \lambda^{|I|}$.

\begin{restatable}{corollary}{hardcorecounting}
    We can approximate the partition function of the hardcore model when $\lambda \leq \frac{\alpha}{\Delta-1}$ for $\alpha < 1$ in $\Tilde{O}(\frac{1}{\epsilon^2})$ rounds, within multiplicative error $\epsilon$ with probability at least $\frac{3}{4}$.
\end{restatable}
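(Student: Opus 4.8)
The only ingredient here that goes beyond the general $\Tilde{O}(n^{1/3}/\epsilon^2)$-round counting corollary is that, in this low-fugacity regime, one transition of the distributed Metropolis--Hastings chain can be simulated for $n$ instances in $\Tilde{O}(1)$ rounds instead of the $O(n^{1/3})$ rounds of Theorem~\ref{simulationTheorem}; everything else I would reuse from Liu, Yin, and Zhang~\cite{ParGibCount}. Concretely, the plan is: (i) recall that estimating $Z(\lambda)=\sum_I\lambda^{|I|}$ to multiplicative error $\epsilon$ reduces, via~\cite{ParGibCount}, to drawing $\Tilde{O}(n/\epsilon^2)$ samples from hardcore distributions of fugacity at most $\lambda$, arranged as $\Tilde{O}(1/\epsilon^2)$ batches of $n$ samples, each within $\delta=n^{-\Theta(1)}$ total variation distance of its target; (ii) note that $\lambda\le\frac{\alpha}{\Delta-1}$ with $\alpha<1$ gives $\frac{\Delta\lambda}{1+\lambda}<1$, i.e., Dobrushin's uniqueness condition holds with a constant gap, so the chain mixes within $O(\log\frac{n}{\delta})=\Tilde{O}(1)$ transitions and a batch costs $\Tilde{O}(1)$ transitions; and (iii) show one transition across all $n$ instances runs in $\Tilde{O}(1)$ rounds. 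Steps (i) and (ii) are inherited, so (iii) is the content.

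For (iii), the structural point is that a hardcore update at a vertex $v$ in instance $j$ depends only on $v$'s own current spin, $v$'s private coins, and whether some $G$-neighbor of $v$ currently lies in (or is proposed into) the independent set $I_j$ --- an ``out'' vertex may turn ``in'' only when all of its neighbors are out, and an ``in'' vertex turns ``out'' regardless of its neighbors. So the only communication a transition requires is for each vertex $u$ to tell its $\deg(u)$ neighbors the list of instances $j$ with $u\in I_j$ (and, if the distributed chain proposes before resolving conflicts, the instances in which $u$ is proposed ``in''). Every such event has marginal probability at most $\lambda$ per vertex per instance, so --- coupling the $n$ chains from the empty set with an unconstrained reference process that stochastically dominates membership (including proposed membership) --- a Chernoff bound and a union bound over the $\le n$ vertices and $\Tilde{O}(1)$ transitions give that, with high probability, each vertex lies in the independent set in at most $2\lambda n+O(\log n)$ of the $n$ instances. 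Hence every vertex sends at most $\deg(v)\bigl(2\lambda n+O(\log n)\bigr)\le\frac{2\alpha\Delta}{\Delta-1}n+O(\Delta\log n)=\Tilde{O}(n)$ messages (using $\Delta<n$) and receives the symmetric amount; although this load is badly unbalanced across the clique edges at $v$, Lenzen's \cc{} routing lemma delivers all of them in $\Tilde{O}(1)$ rounds, after which each vertex computes its new spin in every instance locally.

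The step I expect to be the main obstacle is exactly this concentration bound, since the $n$ chains are not stationary during their $\Tilde{O}(1)$-transition burn-in, so the clean ``marginal at most $\lambda$'' claim must be replaced by a monotone-coupling argument: starting every chain from $\emptyset$ and feeding the reference process the same update sequence and coins, the hardcore constraint can only ever \emph{prevent} a vertex from turning ``in'', so at every step the ``in''-set of each real instance is pointwise dominated by that of an across-vertices-independent lazy $\{\mathrm{in},\mathrm{out}\}$ walk with in-probability $\frac{\lambda}{1+\lambda}\le\lambda$, and concentration for the dominating product process transfers upward. With this in hand the pieces assemble: $\Tilde{O}(1/\epsilon^2)$ batches, each $\Tilde{O}(1)$ transitions, each transition $\Tilde{O}(1)$ rounds, yields $\Tilde{O}(1/\epsilon^2)$ rounds; the total failure probability is pushed below $\frac{1}{4}$ by taking $\delta=n^{-\Theta(1)}$, a union bound over the routing-concentration and sampling events, and the standard median-of-estimates boosting, exactly as in the corollary following Theorem~\ref{simulationTheorem}.
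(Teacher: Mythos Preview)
Your plan is correct and matches the paper's approach almost exactly: exploit that in the low-fugacity hardcore regime only vertices that are (or propose to be) ``in'' need to speak, bound their number per vertex by $\Tilde O(n/\Delta)$ via Chernoff across the $n$ independent chains, multiply by $\deg(v)\le\Delta$ to get $\Tilde O(n)$ messages per vertex, and invoke Lenzen's routing for $\Tilde O(1)$ rounds per transition.

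The one place where the paper is cleaner is precisely the step you flagged as ``the main obstacle.'' You bound the event $\{v\in I_j\text{ at step }t\}$ and introduce a monotone coupling to a product reference process to control its marginal during burn-in. The paper instead bounds the coarser event $X_{v,j}=\{v\text{ ever proposes }1\text{ in chain }j\text{ up to the mixing time}\}$. Since the $n$ chains are independent, the indicators $X_{v,j}$ are automatically independent in $j$, and a union bound over the $\Tilde O(1)$ steps gives $\Pr[X_{v,j}=1]\le \Tilde O\bigl(p\cdot\tfrac{\lambda}{1+\lambda}\bigr)=\Tilde O(1/\Delta)$ directly---no stochastic domination needed. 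Because a vertex can have label $1$ at some step only if it proposed $1$ at some earlier step, $X_{v,j}$ upper-bounds every message-triggering event you care about, so Chernoff on $\sum_j X_{v,j}$ already suffices. Your monotone-coupling route is valid, just more machinery than required.
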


Many problems in the \cc{} can be solved very fast, often in $O(1)$ rounds.  For example, sorting \cite{LenzenRouting}, MST \cite{fastMST}, and $\Delta+1$-coloring \cite{fastColoring}.  Therefore, it is natural to wonder if counting problems could also be solved much faster than $\Tilde{O}(\frac{n^{1/3}}{\epsilon^2})$ rounds.  One problem for which no $\Tilde{O}(1)$ round algorithm is known, is detecting whether the input graph contains a triangle.  The current best known algorithm for triangle detection takes $O(n^{0.157})$ rounds, using distributed matrix multiplication \cite{MatrixMult} and the state of the art for sequential matrix multiplication by Williams, Xu, Xu, and Zhou \cite{NewMatrixMult}.  This uses the fact that the trace of the adjacency matrix cubed is nonzero if and only if the graph has a triangle.  This improves on an earlier result \cite{TriangleCounting} taking $O(n^{1/3})$ rounds. 

Directly counting the number of triangles in the input graph does not fall into the category of counting problems our approach can handle; therefore it does not offer us a direct lower bound.  However, when $\epsilon \leq \frac{1}{32n^2}$, we can use a different counting problem to detect whether the graph contains a triangle.  This gives us a conditional lower bound for counting in the class of problems we consider.  The precise lower bound is given in \cref{lowerBound}.
\begin{tempTheorem}[Lower Bound]Approximating the partition function of a general local Gibbs distribution when $\epsilon \leq \frac{1}{32n^2}$ with success probability $p$ is as hard as triangle detection with success probability $p$.  This holds even when the locality and fast mixing conditions are met.
\end{tempTheorem}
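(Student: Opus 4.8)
The plan is to reduce triangle detection to the partition-function problem: given the input graph $G$ on $n$ vertices, build a local, rapidly mixing Gibbs distribution $\mu_G$ whose partition function $Z_G$ determines, up to a multiplicative $\frac{1}{32n^2}$ error, whether $G$ contains a triangle. First I would let $\mu_G$ be the hardcore model on the complement $\overline{G}$ with fugacity $\lambda=\frac{\alpha}{n-1}$, where $\alpha<1$ is the constant from the hardcore corollary. Since $\Delta(\overline{G})\le n-1$ for every $G$, this guarantees $\lambda\le\frac{\alpha}{\Delta(\overline{G})-1}$, so $\mu_G$ is exactly an instance the paper already covers, and therefore meets the locality and fast mixing conditions regardless of $G$. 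Setting up $\mu_G$ requires no communication (each vertex's $\overline{G}$-neighborhood is the complement of its $G$-neighborhood), and the only global quantity needed, $m=|E(G)|$, is computed by summing degrees in $O(1)$ \cc{} rounds.

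The crux is a clique-vanishing identity. The hardcore partition function on $\overline{G}$ is $Z_G=\sum_{k\ge 0} i_k(\overline{G})\,\lambda^k$, where $i_k(\overline{G})$ is the number of independent sets of size $k$ in $\overline{G}$, i.e.\ the number of $k$-cliques of $G$. If $G$ is triangle-free it has no $k$-clique for $k\ge 3$, so $i_k(\overline{G})=0$ for all $k\ge 3$ and $Z_G$ collapses to the exactly computable value $B_G:=1+n\lambda+m\lambda^2$. If $G$ contains a triangle, then $i_3(\overline{G})\ge 1$ and every $i_k(\overline{G})\ge 0$, so $Z_G\ge B_G+\lambda^3$. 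The reduction therefore computes $B_G$, invokes the assumed approximation algorithm on $\mu_G$ with $\epsilon$ set to a third of the relative gap $\frac{\lambda^3}{2B_G+\lambda^3}$ (which lies below $\frac{1}{32n^2}$), and reports a triangle iff the returned estimate $\hat Z$ exceeds $(1+\epsilon)B_G$: in the triangle-free case $\hat Z\le(1+\epsilon)Z_G=(1+\epsilon)B_G$, and in the triangle case $\hat Z\ge(1-\epsilon)(B_G+\lambda^3)>(1+\epsilon)B_G$ by the choice of $\epsilon$. This wrapper is deterministic and runs in $O(1)$ rounds, so triangle detection inherits the success probability $p$. A one-line estimate ($B_G\le(1+n\lambda)^2$) makes the relative gap $\Omega(\lambda/(1+n\lambda)^2)$, which is $\Omega(1/n^3)$ for the safe choice of $\lambda$ above and $\Omega(1/n^2)$ once $\lambda$ may be taken a constant.

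The step I expect to be the real obstacle is keeping the fast mixing hypothesis available for the hard instance while not wasting too much of the gap: taking $\lambda=\Theta(1/n)$ buys fast mixing for free from the hardcore corollary but shrinks the separation by a factor $n$, whereas recovering the sharper $\frac{1}{32n^2}$ threshold --- or, alternatively, replacing $\mu_G$ by a bounded-spin ``label the three vertices of a triangle'' gadget over the complete graph, whose partition function is literally $B_G+c\cdot(\text{number of triangles of }G)$ with $B_G=O(n^2)$ and $c$ a known constant --- requires either a mixing analysis tolerating larger fugacity on the complements that actually arise, or a direct verification that such a constant-spin variant still falls inside the paper's locality and mixing regime. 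The remaining ingredients (the clique-vanishing identity, the $O(1)$-round computation of $B_G$, and the threshold arithmetic) are routine.
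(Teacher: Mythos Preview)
Your reduction is valid and elegant, but it does not establish the theorem as stated: the threshold $\epsilon\le\frac{1}{32n^2}$ is part of the claim, and your hardcore-on-$\overline{G}$ construction only separates the triangle/triangle-free cases by a relative gap of order $\lambda^3/B_G=\Theta(1/n^3)$ when $\lambda=\Theta(1/n)$ is chosen so that the paper's hardcore mixing theorem applies. That proves ``$\epsilon\le c/n^3$ is as hard as triangle detection,'' which is strictly weaker --- it says nothing about algorithms that achieve $\epsilon\in(\Theta(1/n^3),\,1/(32n^2)]$. You correctly identify this as the obstacle in your final paragraph, but you do not actually close it: pushing $\lambda$ to a constant does recover an $\Omega(1/n^2)$ gap, but then $\Delta(\overline G)$ can be $n-1$ and the fast-mixing hypothesis is lost, and your sketched ``label the three vertices of a triangle'' gadget is not worked out.

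The paper takes a different route that hits $1/(32n^2)$ directly. It builds a Gibbs distribution on $G$ itself (not the complement) with alphabet $\mathcal{A}=V\cup[3n]$: each vertex labels itself either with one of its neighbors or with one of $3n$ dummy colors, and the Hamiltonian counts edges whose two endpoints carry the same vertex-label. Then $Z(0)=\prod_v(3n+\deg v)$ is known, $Z(\infty)<Z(0)$ iff some edge $\{a,b\}$ can be labeled by a common neighbor $c$ (i.e.\ iff there is a triangle), and the probability of that event for a single fixed triangle is at least $1/(4n)^2$, giving $1-Z(\infty)/Z(0)\ge 1/(16n^2)$. Fast mixing follows because the $3n$ dummy colors give more freedom than $3\Delta$-coloring. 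The trade-off: your approach reuses the hardcore mixing result off the shelf and is conceptually cleaner, but pays a factor of $n$ in the $\epsilon$ threshold; the paper's bespoke construction needs its own (easy) mixing argument but lands exactly on the stated $1/(32n^2)$.
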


\subsection{Related Work}
There have been several recent results on distributed counting and sampling from complex distributions \cite{WhatCanBeSampled, FischerGhaffari, FengHayesYin, ExactDistSampl, distJVV, simMetropolis, LLLsampling, distFlipDynamics, distSpanningTree, ccSpanningTree}.  In addition to the results discussed earlier, a few more have bearing on our results here.  First, the only other result on sampling/counting these types of complex combinatorial structures in the \cc{} model is \cite{ccSpanningTree}; however, this result focuses on sampling random spanning trees and not counting.  In the \local{} model, Feng and Yin \cite{distJVV} gave a distributed version of the reduction from exact sampling to approximate inference (a distributed variant of counting).  This result is in the opposite direction of our result from approximate counting to approximate sampling.  This result is also not amenable to the \cc{} model as each vertex needs a large amount of local information.  Finally, in the \local{} model a few lower bounds are known.  First, Guo, Jerrum, and Liu \cite{LLLsampling} showed that sampling from the hardcore model requires $\Omega(\log n)$ rounds.  This same lower bound was established by Feng, Sun, and Yin \cite{WhatCanBeSampled} for $q$-colorings.  It is unknown if faster sampling is possible in the \cc{}.  For the hardcore model on graphs of constant degree, $\lambda = \frac{(\Delta-1)^{\Delta-1}}{(\Delta-2)^\Delta}$ serves as a computational hardness boundary in both the sequential and distributed settings.  In the sequential setting, there is a polynomial algorithm for sampling/counting when $\lambda$ is below the threshold by Weitz \cite{Weitz} and no polynomial algorithm for sampling/counting when $\lambda$ is above it unless RP=NP by Sly \cite{HardcoreThreshold}.  The same dynamic holds for sampling in the \local{} model, with an $O(\log^3 n)$ upper bound below the threshold \cite{distJVV} and an $\Omega(n^{1/11})$ lower bound above the threshold \cite{WhatCanBeSampled}.  Our results give an upper bound for $\lambda$ up to a constant fraction of this threshold.

\section{Technical Preliminaries}
\subsection{Congested Clique}
We take a very simple version of the \cc{} model where every vertex is represented by a machine with a unique ID of $O(\log n)$ bits.  Without loss of generality, we assume the vertices have IDs $0,1,2,...,n-1$.  Each machine holds its vertex's edges as well as its portion of the input.  For convenience we often mention the vertices themselves doing computation, referring to the machines holding the vertices.  The machines form an all-to-all communication network that communicates in synchronous rounds.  In each round, a machine can perform unbounded local communication and then send a message of size $O(\log n)$ bits to each other machine.  By the routing result of Lenzen \cite{LenzenRouting}, we can relax this requirement to each machine simply sending and receiving a total of $O(n)$ words of $O(\log n)$ bits each round, since these messages can all be successfully delivered within $O(1)$ deterministic rounds. 

\subsection{Gibbs Distributions on Graphical Models}
Consider a traditional graph labeling problem, where each vertex receives a label from some underlying alphabet $\mathcal{A}$.  We focus on randomly sampling from a subset, $\mathcal{S}$, of such labelings, given a distribution.  In particular, we want to sample from a Gibbs distribution.  Borrowing terminology from physics, we have the Hamiltonian, a function $H:\mathcal{S}\to \{0,1,...,h\}$, for some integer $h \geq 0$.  

\begin{definition}[Gibbs Distribution]
    The Gibbs distribution at inverse temperature $\beta\geq 0$ assigns a labeling, $\sigma\in\mathcal{S}$, a probability proportional to $\exp(-\beta H(\sigma))$.
\end{definition}  

In the trivial case, if we choose $H=0$, we get a uniform distribution over $\mathcal{S}$.  

\begin{definition}[Partition Function]
    The normalizing constant of the distribution, $Z(\beta) = \sum_{\sigma\in\mathcal{S}} \exp(-\beta H(\sigma))$, is referred to as the partition function.
\end{definition}  

We will make the assumption that $|\mathcal{A}|$ and $h$ are bounded by a polynomial in $n$.  This is not a significant restriction and is necessary both algorithmically and to ensure messages are of size $O(\log n)$.

For this paper we will focus heavily on two well studied distributions, the Potts model and the hardcore model.

\begin{definition}
    The Potts model is a Gibbs distribution where $\mathcal{A} = [q]$, is a set of colors.  $\mathcal{S}$ is $[q]^V$.  The Hamiltonian, $H$, assigns to each, possibly improper, coloring the number of monochromatic edges (edges where both vertices have the same color).  
\end{definition}

Note that when $\beta = \infty$, the Potts model simply becomes the uniform distribution over proper colorings.\footnote{We treat the infinite case in the sense of the limit as $\beta\to\infty$.}  Furthermore, $Z(\infty)$ counts the number of proper colorings.  On the other hand, when $\beta = 0$, the model is uniform over all proper and improper colorings.  Then $Z(0) = q^n$.  More precisely, when $\beta > 0$, this model is called the antiferromagnetic Potts model, since neighboring vertices favor not having the same color.

\begin{definition}
    The hardcore model is a distribution over the independent sets of a graph, parameterized by $\lambda\geq 0$ called the fugacity.  The probability of the independent set $I$ is proportional to $\lambda^{|I|}$.
\end{definition}
We can see that the hardcore model is also a Gibbs distribution by a change of variable.  Here $\mathcal{A} = \{0,1\}$.  We restrict $\mathcal{S}$ to require that the vertices labeled $1$ form an independent set.  The Hamiltonian, $H$, then counts the number of vertices given the label $1$.  We get the original distribution by choosing $\beta = -\ln\lambda$.

Here, $\lambda = 1, \beta = 0$, gives the uniform distribution over all independent sets.  Thus $Z(0)$ counts the number of independent sets.  Likewise, $\lambda = 0, \beta = \infty$, gives the distribution that only gives weight to the empty set.  Therefore, $Z(\infty) = 1$.  As a matter of convenience, the partition function is usually given in terms of $\lambda$ rather than $\beta$.  We will denote this $\hat{Z}(\lambda) = \sum_{\sigma\in\mathcal{S}}\lambda^{H(\sigma)}$.

\subsection{Counting From Sampling}
We would now like to estimate $Z(\beta)$, for a given choice of $\beta$.  We call this the generalized counting problem, as $Z$ gives the weighted count over all labelings.  In particular, either $Z(0)$ or $Z(\infty)$ typically give meaningful combinatorial counts.

Reductions from counting to sampling are well known.  Notably, Jerrum, Valiant, and Vazirani \cite{JVV}, showed that approximate sampling and randomized approximate counting are reducible to each other in polynomial-time.  A more efficient line of results have since developed for the specialized case of approximating the partition function of a Gibbs distribution, starting with Bezáková, Štefankovič, Vazirani, and Vigoda \cite{SimAnneal1} and then an improved result by Štefankovič, Vempala, and Vigoda \cite{SimAnneal2}.

We now discuss the high level technique of \cite{SimAnneal2}.  Our goal is to estimate $Z(\beta)$, for some $\beta > 0$ (often $\beta = \infty)$.  The basic idea, common to this area, is to rewrite $$Z(\beta) = Z(0) \times \frac{Z(\beta_1)}{Z(\beta_0 = 0)}\times\frac{Z(\beta_2)}{Z(\beta_1)}\times ... \times \frac{Z(\beta_l = \beta)}{Z(\beta_{\ell-1})}$$
for some sequence $\beta_0 = 0 < \beta_1 < ... < \beta_l = \beta$, called a cooling schedule.  We then estimate each term in the product to obtain the overall estimate of $Z(\beta)$.  We do require that $Z(0)$ is known in advance; however, this is usually simple to compute, for example in the Potts model it is $q^n$.  Note that in the case of the hardcore model, $Z(\infty)$ is known and not $Z(0)$, so we need to slightly rewrite the expression.  Further, as discussed earlier, $\hat{Z}(\lambda)$ is hard to approximate unless $\lambda \leq \frac{(\Delta-1)^{\Delta-1}}{(\Delta-2)^\Delta}$.

To estimate $\frac{Z(\beta_{i+1})}{Z(\beta_i)}$, we sample $\sigma$ from the Gibbs distribution at inverse temperature $\beta = \beta_i$.  We then compute $\exp((\beta_i - \beta_{i+1})H(\sigma))$ and note that it is an unbiased estimator for $\frac{Z(\beta_{i+1})}{Z(\beta_i)}$.  It follows from a result of Dyer and Frieze \cite{DyerFrieze} that the average of $O(\frac{\ell}{\epsilon^2})$ such estimators per term is sufficient to give an overall multiplicative $\epsilon$-estimation of $Z(\beta)$ if a simple condition is met on the cooling schedule.  In particular, for all $i$, $\frac{Z(\beta_i)Z(2\beta_{i+1}-\beta_i)}{Z(\beta_{i+1})^2} \leq B$, for some constant $B$.  In \cite{SimAnneal2}, it is shown how to find such a cooling schedule with length $O(\sqrt{\ln Z(0)}\ln\ln Z(0)\ln h)$.  Since we restrict $|\mathcal{A}|$ and $h$ to be $O(\text{poly}(n))$, this implies we need a total of $\Tilde{O}(\frac{n}{\epsilon^2})$ samples for an approximate count.  The problem with running this approach in parallel is that the cooling schedule can only be computed dynamically.  In other words, not all of these samples can be taken at a time.

\subsection{Parallel Counting From Sampling}
A more complex product estimation scheme for the partition function, the paired product estimator, was found by Huber \cite{HPPE}.  Very recently, Liu, Yin, and Zhang \cite{ParGibCount}, showed how to combine the two estimators to allow for parallel sampling.  Again, we need a total of $\Tilde{O}(\frac{n}{\epsilon^2})$ samples to find the count.  Unlike the approach of \cite{SimAnneal2}, the samples can all be taken simultaneously rather than being dependent on each other.  The following theorem summarizes their result when $h$ and $|\mathcal{A}|$ are $O(\text{poly}(n))$.

\begin{theorem}\label{simulatedAnnealingTheorem}
    For a Gibbs distribution and inverse temperatures $0\leq\beta_1<\beta_2$, $\frac{Z(\beta_2)}{Z(\beta_1)}$ can be computed with a multiplicative error of $\epsilon$ with probability at least $\frac{3}{4}$ given the Hamiltonian of $\Tilde{O}(\frac{n}{\epsilon^2})$ samples taken from the Gibbs distribution for inverse temperatures ranging from $[\beta_1,\beta_2]$.  Furthermore, the temperature each sample is taken at can be precomputed and only depends on $n$, $h$, and $|\mathcal{A}|$.
\end{theorem}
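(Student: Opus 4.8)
The plan is to estimate $Z(\beta_2)/Z(\beta_1)$ by a telescoping product over a cooling schedule $\beta_1 = \gamma_0 < \gamma_1 < \cdots < \gamma_\ell = \beta_2$,
\[
\frac{Z(\beta_2)}{Z(\beta_1)} = \prod_{i=0}^{\ell-1}\frac{Z(\gamma_{i+1})}{Z(\gamma_i)},
\]
estimating each factor from samples and multiplying. Following \cite{SimAnneal2}, the naive factor estimator $\exp\!\big((\gamma_i-\gamma_{i+1})H(\sigma)\big)$ with $\sigma\sim\mu_{\gamma_i}$ is unbiased for $Z(\gamma_{i+1})/Z(\gamma_i)$ but has relative second moment $\frac{Z(\gamma_i)Z(2\gamma_{i+1}-\gamma_i)}{Z(\gamma_{i+1})^2}$; the Dyer--Frieze condition \cite{DyerFrieze} bounds this by a constant, yet keeping it bounded with a short schedule forces the $\gamma_i$ to be placed using the unknown ratios, which is exactly why the schedule must be built adaptively and the samples cannot be drawn in parallel.

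To break this dependence I would replace each factor by Huber's paired product estimator \cite{HPPE}. Writing $m_i = \tfrac{\gamma_i + \gamma_{i+1}}{2}$ and drawing $\sigma\sim\mu_{m_i}$, the two quantities $\exp\!\big((m_i-\gamma_{i+1})H(\sigma)\big)$ and $\exp\!\big((m_i-\gamma_i)H(\sigma)\big)$ are unbiased for $Z(\gamma_{i+1})/Z(m_i)$ and $Z(\gamma_i)/Z(m_i)$, so their ratio estimates $Z(\gamma_{i+1})/Z(\gamma_i)$ using only samples drawn at the single, precomputed midpoint $m_i$. The decisive feature is that the relevant variance is now controlled by the symmetric quantity $\frac{Z(\gamma_i)Z(\gamma_{i+1})}{Z(m_i)^2}$. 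Since $\beta\mapsto\ln Z(\beta)$ is convex, with $\frac{\mathrm d^2}{\mathrm d\beta^2}\ln Z(\beta)=\mathrm{Var}_{\mu_\beta}(H)$, this quantity is at least $1$ and exceeds $1$ only by the local curvature of $\ln Z$ across $[\gamma_i,\gamma_{i+1}]$ --- a one-sided condition that never references reflected points $2\gamma_{i+1}-\gamma_i$ outside the schedule.

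The sample budget is then governed not by the schedule length but by the aggregate curvature. Allocating samples to the factors in proportion to their standard deviations, the product estimator attains relative variance $\epsilon^2$ with
\[
O\!\left(\frac{1}{\epsilon^2}\left(\int_{\beta_1}^{\beta_2}\sqrt{\mathrm{Var}_{\mu_\beta}(H)}\,\mathrm d\beta\right)^2\right)
\]
samples in total. I would invoke the bound of \cite{SimAnneal2} that $\big(\int_{\beta_1}^{\beta_2}\sqrt{\mathrm{Var}_{\mu_\beta}(H)}\,\mathrm d\beta\big)^2 = \widetilde O(\ln Z(0))$, together with $\ln Z(0)\le n\ln|\mathcal A| = \widetilde O(n)$, to get $\widetilde O(n/\epsilon^2)$ samples, all taken at the midpoints $m_i$. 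A Chebyshev bound on the product of the independent factor estimates, boosted by the median trick, then yields multiplicative error $\epsilon$ with probability at least $\tfrac34$.

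The hard part will be making the construction genuinely non-adaptive, i.e.\ ensuring the temperatures depend only on $n$, $h$, and $|\mathcal A|$. Both the placement of the midpoints and the per-factor sample allocation above want to be dense exactly where $\mathrm{Var}_{\mu_\beta}(H)$ is large, and this profile is graph dependent; a purely worst-case pointwise curvature bound would only give a schedule of length $\widetilde O(h)$ and $\widetilde O(h^2/\epsilon^2)$ samples, losing the square-root saving. Recovering the $\widetilde O(n/\epsilon^2)$ bound from a single universal schedule is precisely where the robustness of the paired product estimator is combined with the Chebyshev schedule of \cite{SimAnneal2} in \cite{ParGibCount}, and it is where I would concentrate the effort; the unbiasedness and concentration arguments are comparatively routine.
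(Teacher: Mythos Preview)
The paper does not prove this theorem: it is stated as a black-box summary of the result of Liu, Yin, and Zhang \cite{ParGibCount} (see the sentence immediately preceding the theorem, ``The following theorem summarizes their result\ldots''). There is therefore no paper proof to compare against beyond the informal description in Sections~2.3--2.4.

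That said, your sketch is consistent with the approach the paper attributes to \cite{ParGibCount}: combine Huber's paired product estimator \cite{HPPE} with the cooling-schedule machinery of \v{S}tefankovi\v{c}--Vempala--Vigoda \cite{SimAnneal2} so that the schedule can be fixed in advance rather than built adaptively. Your identification of the crux---that achieving a \emph{universal} schedule (depending only on $n$, $h$, $|\mathcal A|$) while retaining the $\widetilde O(n/\epsilon^2)$ sample bound is where the real work lies---matches the paper's framing. One technical caution: the bound you quote, $\big(\int_{\beta_1}^{\beta_2}\sqrt{\mathrm{Var}_{\mu_\beta}(H)}\,d\beta\big)^2=\widetilde O(\ln Z(0))$, is not literally a statement in \cite{SimAnneal2}; what they prove is the existence of a short non-adaptive-looking schedule of length $O(\sqrt{\ln Z(0)}\,\ln\ln Z(0)\,\ln h)$, and translating between this discrete schedule length and your continuous integral requires care. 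This is exactly the gap you flag in your final paragraph, so you are aware of it, but as written the displayed sample bound is asserting a lemma you have not yet established rather than invoking one from the literature.
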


The theorem implies that we can approximate $Z(\beta)$, as long as $Z(\beta')$ is known in advance for some $\beta'\geq 0$ and that we can draw $\Tilde{O}(\frac{n}{\epsilon^2})$ samples for inverse temperatures in the range $[\min(\beta,\beta'), \max(\beta,\beta')]$.

\subsection{Pairwise Markov Random Fields}
The traditional way to sample from a Gibbs distribution is using a rapidly mixing Markov chain.  We begin from an arbitrary state (a graph labeling) and simulate transitions until the chain is \textit{mixed}, meaning that the current state is roughly drawn from the desired distribution.  Here we will focus on sampling from the probability space defined by a pairwise Markov random field.  A pairwise Markov random field on a graph consists of vertex constraints and edge constraints.  

For each vertex $v$, we have a constraint $b_v:\mathcal{A}\to\mathbb{R}_{\geq 0}$.  For each edge $e = (u,v)$ we also have a constraint $A_e:\mathcal{A}\times\mathcal{A}\to\mathbb{R}_{\geq 0}$.  Here the edges are undirected, but the function $A_e$ is not necessarily symmetric.  The probability of graph vertex labeling $l$ is defined to be proportional to $$\prod_v b_v(l(v)) \cdot \prod_{e = (u,v)}A_e(l(u), l(v))$$

\begin{example}[Hardcore Model]$ $\\
Choosing $b_v(x) = \begin{cases}
    1,\, \text{if } x = 0\\\lambda, \text{if } x=1
\end{cases}$
and $A_e(x,y) = \begin{cases}
    0,\, \text{if } x = y = 1\\1, \text{else}
\end{cases}$
we get the hardcore model.
\end{example}

Similarly, we can get the Potts model subject to the same change of variable, $\beta = -\ln\lambda$.

\begin{example}[Potts Model]We choose $b_v(x) = 1$
and $A_e(x,y) = \begin{cases}
    \lambda,\, \text{if } x = y\\1, \text{else}
\end{cases}$\end{example}

Finally, this leads us to our definition of a \textit{local Gibbs distribution}.

\begin{definition}[Local Gibbs Distribution]We call a Gibbs distribution local if
\begin{enumerate}
    \item The distribution can be represented as a pairwise Markov random field.
    \item The Hamiltonian $H$ can be represented as a sum over sub-functions defined on each edge and vertex.  $H(\sigma) = \sum_v H_v(\sigma) + \sum_{e=(u,v)}H_e(u,v)$.  Furthermore, each of these subfunctions also has a codomain of $[h]$.
    \item An element $\sigma\in\mathcal{S}$ can be constructed in the \cc{} in $O(\log n)$ rounds.
\end{enumerate}

Clearly the Potts and hardcore models are both examples of local Gibbs distribution.  The third condition is needed to initialize the Markov chain, and it will often hold by a variant of Luby's algorithm \cite{Lubys}.
\end{definition}

\subsection{Distributed Sampling}
Feng, Sun, and Yin \cite{WhatCanBeSampled} found a preliminary distributed Markov chain to sample from pairwise Markov random fields.  In the case of $q$-colorings, this was improved simultaneously by Fischer and Ghaffari \cite{FischerGhaffari} and Feng, Hayes, and Yin \cite{FengHayesYin}, by having vertices only become active with probability $p$.  This symmetry breaking step improved the range of $q$ over which colorings could be quickly sampled.  In particular, their improved chain samples $q$-colorings within a total variation distance of $\delta$ in $O(\log \frac{n}{\delta})$ \congest{} rounds when $q \geq \alpha\Delta$ for $\alpha > 2$.  The improved chain can be generalized beyond colorings, see \cite{ExactDistSampl} by Pemmaraju and Sobel for the chain explicitly given in full generality.  One transition of the chain, for some yet to be determined probability $p$, is defined in the following pseudocode.  Every vertex $v$ has its current label $X_v$.  For ease of notation, in the edge acceptance step, any fraction that is not well defined (because certain vertices may not be active) equals $1$.  Note that on some pairwise Markov random fields this chain may have a very large mixing time, $\gg\log n$.  In fact, there are examples such as $\Delta+1$ coloring, where it is not even possible to transition in finite steps between two states of positive probability.  In these cases the mixing time is infinite.

\begin{algorithm}[H]
\begin{algorithmic}
\caption{Distributed Metropolis-Hastings}
\For{each vertex $v$}\State $v$ becomes active with probability $p$
\If{$v$ is active}\State $v$ samples a proposed color, $\sigma_v$, from a distribution proportional to $b_v(\sigma)$\EndIf
\EndFor

\For{each edge $e = (u,v)$}\State edge $e$ is accepted with probability $\frac{A_e(X_u, \sigma_v)}{\max A_e} \cdot\frac{A_e(\sigma_u, X_v)}{\max A_e} \cdot\frac{A_e(\sigma_u, \sigma_v)}{\max A_e}$
\EndFor

\For{each active vertex $v$}\State $X_v \gets \sigma_v$, if every edge incident to $v$ is accepted
\EndFor
\end{algorithmic}
\end{algorithm}

\section{Our Results}
\subsection{Fast Mixing for the Hardcore and Potts Models}
First, we show that the distributed Metropolis-Hastings Markov chain is fast mixing for certain parameters of the hardcore and Potts models.  In particular, the chain will return a state sampled with total variation distance error at most $\delta$ after $O(\log\frac{n}{\delta})$ transitions.

\begin{restatable}{theorem}{hardcoremixing}
    The distributed Metropolis-Hastings chain mixes in $O(\log \frac{n}{\delta})$ transitions for the hardcore model when $\lambda \leq \frac{\alpha}{\Delta-1}$ for $\alpha < 1$, when $\Delta \geq 2$.
\end{restatable}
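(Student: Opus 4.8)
The plan is to establish fast mixing via a path-coupling argument in the style of Bubley--Dyer, adapted to the Metropolis-Hastings chain with the ``become active with probability $p$'' symmetry-breaking step. First I would set up the coupling on two copies of the chain, $(X_t)$ and $(Y_t)$, that differ at exactly one vertex $w$ (the base case of path coupling). The natural coupling uses the same random set of active vertices in both chains and, for each active vertex $v \notin \{w\}$, the same proposed label $\sigma_v$; at $w$ we must couple the two proposal distributions, but since $b_w$ is identical in both chains this is trivial — the subtlety is only whether $w$'s neighbors' edge-acceptance probabilities differ between the copies. For the hardcore model, $b_v(0)=1$, $b_v(1)=\lambda$, so an active vertex proposes $1$ with probability $\frac{\lambda}{1+\lambda}$; and the edge constraint $A_e$ only ever rejects the proposal $\sigma_u = \sigma_v = 1$, so the only way a disagreement propagates is when a neighbor $u$ of $w$ is active, proposes to enter the independent set, and the acceptance outcome depends on whether $w$ is currently in the set (which is exactly where $X$ and $Y$ differ).

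The core computation is then a standard expected-Hamming-distance bound: I would show that $\mathbb{E}[d(X_1,Y_1)] \leq (1 - c/n)\, d(X_0,Y_0)$ for an appropriate constant $c>0$ when $\lambda \leq \frac{\alpha}{\Delta-1}$ with $\alpha<1$. The disagreement at $w$ is resolved (distance decreases by $1$) with probability $\Theta(p)$ — roughly the chance that $w$ is active and its proposal is accepted, which requires none of its neighbors to block it; a constant fraction of such events succeed once we fix $p$ to be a small constant. A new disagreement at a neighbor $u$ of $w$ is created with probability at most $p \cdot \frac{\lambda}{1+\lambda} \cdot (\text{something } O(1))$, and summing over the at most $\Delta$ neighbors gives an expected increase of $O(p \lambda \Delta / (1+\lambda))$. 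Choosing $p$ a suitable constant, the net drift is negative precisely when $\lambda \Delta$ is bounded away from $\Delta$, i.e. $\lambda(\Delta-1) \le \alpha < 1$; the $\Delta \ge 2$ hypothesis is what keeps $\frac{\lambda}{1+\lambda}$ and the relevant ratios well-behaved and ensures $\Delta - 1 \ge 1$ so the bound is non-vacuous. Since each step the distance changes by at most $O(\Delta)$ (only neighbors of $w$ can newly disagree) and distances are integers bounded by $n$, the standard path-coupling theorem converts the per-step contraction factor $1 - \Theta(1/n)$ into mixing time $O(n \log \frac{n}{\delta})$ — wait, that is too slow; to get $O(\log\frac{n}{\delta})$ I would instead track the \emph{total} expected number of disagreements across an arbitrary pair of states (not just adjacent ones) and show it contracts by a constant factor $1-\Omega(1)$ per step, using that in the hardcore regime each existing disagreement spawns in expectation fewer than one replacement. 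This multiplicative single-step contraction by a constant, combined with starting Hamming distance $\le n$, yields the claimed $O(\log\frac{n}{\delta})$ bound directly.

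The main obstacle I anticipate is making the ``constant-factor contraction'' argument rigorous rather than merely the $1-\Theta(1/n)$ path-coupling version: one must carefully account for the fact that when many vertices are simultaneously active, the acceptance events are correlated, and a single disagreement can in principle influence several neighbors at once within one transition. I would handle this by coupling greedily — process the active vertices and condition on the shared randomness so that a disagreement at $w$ affects each neighbor independently given that neighbor's own activation and proposal — and bound the expected out-degree of ``disagreement spread'' by $\frac{\lambda}{1+\lambda}\cdot\Delta \cdot (\text{acceptance factor}) < \alpha < 1$. A secondary technical point is the ``$1$ for undefined fractions'' convention in the edge-acceptance step: I need to check that when only one endpoint is active the acceptance probability still behaves monotonically in the disagreeing coordinate, which for the hardcore $A_e$ it does since the only non-unit value is $0$. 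Once the constant-rate contraction is in hand, the conclusion follows from the path-coupling / coupling-to-mixing-time lemma with no further difficulty.
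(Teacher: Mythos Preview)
Your approach is essentially the paper's: a one-vertex path coupling using identical activations and proposals, bounding the probability the disagreement at $v$ persists and the probability it spreads to each neighbor, then summing via union bound. The paper obtains the single-step expected Hamming distance bound $1-p+p\bigl(\Delta p\,\tfrac{\lambda}{1+\lambda}\bigr)+\Delta p\,\tfrac{\lambda}{1+\lambda}$ and observes this is a constant strictly below $1$ for a suitable constant $p$ whenever $\lambda(\Delta-1)\le\alpha<1$.

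One clarification on your detour: your initial worry that adjacent-pair path coupling only yields a $1-c/n$ contraction seems to be residual intuition from single-site Glauber dynamics. Here every vertex is active with constant probability $p$ in each step, so the adjacent-pair coupling already contracts by a constant factor $\rho<1$; the path-coupling lemma then extends this same $\rho$ to arbitrary pairs automatically, giving $O(\log(n/\delta))$ mixing with no need to argue separately about ``tracking total disagreements across arbitrary pairs.'' Likewise, the correlation concern you flag is dispatched in the paper by a plain union bound over the $\le\Delta$ neighbors (and the constraint that $Y$ is a valid independent set forces all neighbors of the disagreeing vertex to be labeled $0$, which keeps the case analysis short).
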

\begin{proof}
We will use the standard coupling approach.  Consider two instances of the Markov chain, $X$ and $Y$, that only differ at a single vertex $v$.  In particular $X_v\neq Y_v$.  Now imagine that in the next step of the chain the same vertices are marked active in both chains with the same proposals.  It is sufficient to show that the expected Hamming distance between $X$ and $Y$ after this step is at most a constant less than $1$.

Note that the only vertices that could possibly differ are the vertices in the inclusive neighborhood of $v$.  First consider vertex $v$ itself.  If $v$ is not marked active it will definitely differ between the two chains.  Otherwise, it will differ if it has an active neighbor that is proposing $1$.  Taking the union bound across all such events, we get that the probability $v$ differs between the two chains is at most $1-p+p(\Delta p \frac{\lambda}{1+\lambda})$.  Now, consider a neighbor of $v$, $w$.  $w$ may differ in the two chains if it is active and proposing $1$.  Altogether, taking the union bound over all neighbors of v, we get that expected Hamming distance is bounded by
$$1-p+p\bigg(\Delta p \frac{\lambda}{1+\lambda}\bigg)+\Delta p \frac{\lambda}{1+\lambda}$$
Simplifying, we see that for any choice of $0 \leq \alpha < 1$, there exists a choice of $p$, such that the expression is bounded by a constant less than $1$.
\end{proof}

We use the coupling from \cite{FengHayesYin}, with slight modification, to generalize their fast mixing time result for $\alpha\Delta$-proper colorings to the Potts model.  This extension is not surprising, since intuitively mixing will take longest when $\beta = \infty$, the uniform distribution over proper colorings.  As $\beta\to 0$, edges are more likely to be accepted leading to faster mixing.  Since the proof closely follows the structure of Theorem 1 in \cite{FengHayesYin}, we defer it to the appendix.

\begin{restatable}{theorem}{mixingpotts}The distributed Metropolis-Hastings chain mixes in $O(\log\frac{n}{\delta})$ transitions for the Potts model for $\beta\geq 0$, when $q \geq \alpha\Delta$, for $\alpha > 2$.
\end{restatable}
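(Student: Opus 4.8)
The plan is to run the path-coupling method (Bubley--Dyer) with the Hamming metric on $[q]^V$: it suffices to couple a single transition of two copies $X,Y$ of the chain that differ only at one vertex $v$ (with $X_v\neq Y_v$) so that the expected Hamming distance after the step is at most $1-\Omega(1)$; since the metric has diameter $n$, this gives mixing time $O(\log\frac{n}{\delta})$. For the coupling I would import the one of Feng, Hayes, and Yin essentially unchanged: couple the activation coins identically, give $v$ (when active) the same proposed color $\sigma_v$ in both chains, couple each neighbour $w$ of $v$ through their ``proposal-swap'' rule (the same proposal in both chains unless it lands in $\{X_v,Y_v\}$, in which case the two values $X_v,Y_v$ are interchanged), and couple the acceptance coin of each edge by a common uniform. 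The one structural change from proper colorings is the acceptance step: for $\beta\ge 0$ we have $\lambda=e^{-\beta}\le 1$, hence $\max A_e=1$, so an edge $e=(a,b)$ is accepted with probability exactly $\lambda^{k}$ where $k\in\{0,1,2,3\}$ counts the monochromatic pairs among $(X_a,\sigma_b),(\sigma_a,X_b),(\sigma_a,\sigma_b)$ --- a monochromatic pair no longer forces rejection, it merely contributes a factor $\lambda$.

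With this I would redo the two estimates of Feng--Hayes--Yin while tracking the $\lambda$-powers. For the coalescence of $v$: conditioned on $v$ active, the proposal $\sigma_v$ and (by the swap) the monochromatic counts of all edges at $v$ agree in the two chains up to a low-order event, so $v$ takes the common value $\sigma_v$ in both chains whenever all its incident edges are accepted; by linearity the expected number of monochromatic pairs on those edges is at most $\deg(v)/q + O(p/q) \le 1/\alpha + O(p/\alpha)$, so $\Pr[v\text{ updates in both}\mid v\text{ active}] \ge 1 - (1-\lambda)/\alpha - O(p/\alpha)$ and $v$ coalesces with probability at least $p\bigl(1 - (1-\lambda)/\alpha - O(p/\alpha)\bigr)$. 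For the loss at neighbours: a neighbour $w$ can start disagreeing only if it is active and $\sigma_w\in\{X_v,Y_v\}$; the swap splits this into the case $\sigma_w = Y_v$, which reproduces the coloring situation exactly and contributes at most $1/q$ per neighbour, and the new case $\sigma_w = X_v$, in which the edge $(v,w)$ carries the monochromatic pair $(X_v,\sigma_w)$ and is therefore accepted --- forcing $w$ to update --- only with probability at most $\lambda$, contributing at most $\lambda/q$ per neighbour. Summing over the at most $\Delta$ neighbours gives total loss at most $(1+\lambda)p/\alpha$, and combining,
\[
  1 - p\bigl(1 - (1-\lambda)/\alpha - O(p/\alpha)\bigr) + (1+\lambda)p/\alpha \;=\; 1 - p\bigl(1 - 2/\alpha - O(p/\alpha)\bigr):
\]
the $\lambda$-terms cancel, so for any fixed $\alpha > 2$, picking the activation probability $p = p(\alpha)$ small enough leaves a positive constant contraction factor, uniformly over $\beta \ge 0$. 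This matches the stated intuition that $\beta = \infty$ (the coloring case) is the hardest.

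The step I expect to require care --- the reason this is a ``slight modification'' of the cited argument rather than a black-box reduction --- is the bookkeeping of the low-order terms once $\lambda > 0$: the proposal-swap can now flip more than one of the three $\lambda$-factors in the three-way acceptance check at once, and, just as in the coloring analysis, a swapped proposal of a common neighbour can propagate a discrepancy to a distance-two vertex. One must check that each such event still carries probability $O(p/q)$ (or an even smaller $O(p^2/q^2)$ for the distance-two effects, which summed over the at most $\Delta^2$ relevant vertices stays $O(p^2/\alpha^2)$) and is therefore absorbed into the $O(p/\alpha)$ term, leaving the clean cancellation above. I expect this term-chasing, not any conceptual difficulty, to be where the work lies; everything else is the Feng--Hayes--Yin proof with $0$ replaced by $\lambda$.
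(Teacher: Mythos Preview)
Your high-level plan is right --- path coupling on the Hamming metric with a Feng--Hayes--Yin proposal swap, and the observation that the $\lambda$-dependent gain in coalescence and loss at neighbours cancel to leave the same $1-2/\alpha$ margin --- and this matches the paper. But there is a real gap in your handling of the distance-two propagation.

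You describe the swap as applied to every neighbour $w$ of $v$ whenever $\sigma_w\in\{X_v,Y_v\}$, and then assert the resulting distance-two discrepancies cost $O(p^2/q^2)$ per vertex. That estimate is wrong. If $u$ is a distance-two vertex with $X_u\in\{X_v,Y_v\}$ --- a condition on the \emph{initial state}, not a random event --- and some $w\in N(u)\cap N(v)$ has a swapped proposal, then the pair $(X_u,\sigma_w)$ is monochromatic in exactly one chain, and with a common-uniform acceptance coin the edge $(u,w)$ is accepted in one chain but not the other with probability $\Theta(1-\lambda)$. The cost is therefore $\Theta(p^2/q)$ per such pair, not $\Theta(p^2/q^2)$; in an adversarial state where every distance-two vertex is coloured $X_v$ or $Y_v$, the total is $\Theta(p^2\Delta^2/q)=\Theta(p^2\Delta/\alpha)$, which your $O(p/\alpha)$ slack does not absorb unless $p=O(1/\Delta)$, destroying the $O(\log(n/\delta))$ bound. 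This is already the obstruction at $\lambda=0$, so ``just as in the coloring analysis'' is where the misconception lies.

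The actual Feng--Hayes--Yin coupling, and the paper's Potts extension, avoid this by being more elaborate than your description. The three $\lambda$-coins on each edge are flipped \emph{before} proposals; the swap is applied only to neighbours $u\in\tilde N(v')$ reached along a ``type-1 active'' edge (first coin tails), and even then is \emph{suppressed} whenever $u$ has some other neighbour $w$ with $X_w\in\{X_{v'},Y_{v'}\}$ along a type-1 active edge, or with $\sigma_w\in\{X_{v'},Y_{v'}\}$ along a type-2 active edge. These exceptions are the whole point: they make the distance-two contribution \emph{exactly zero}, not merely small, and their cost is what produces the $(1-p+p\frac{q-3}{q})^\Delta$ factor in the final bound. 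The bookkeeping you should anticipate is not bounding residual distance-two terms but verifying that, with coin-based active edges replacing hard rejections, these suppression rules still kill all propagation for $\lambda>0$ and the one-colour-per-neighbour count still holds.
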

\subsubsection{Quick Note About Sampling Precision}
Since we are using approximate rather than exact sampling in our algorithm, it raises the question of what total variation distance error, $\delta$, our sampler should allow.  By a standard coupling argument, if we take a total of $k$ samples with error $\delta$, we increase the probability of the algorithm failing by at most $k\delta$.  If $\delta = 0$, we get that the algorithm fails with probability at most $\frac{1}{4}$.  Thus it is sufficient for $k\delta \leq \frac{1}{8}$, so that the total algorithm failure probability is bounded by a constant less than $\frac{1}{2}$.  This can be boosted with the median trick to give a success probability at least $\frac{3}{4}$.  Finally, we assume $\epsilon > \frac{1}{\sqrt{n}}$, otherwise our result is trivial.  This means that from now on we can always choose $\delta = \frac{1}{8k}$, and an $O(\log\frac{n}{\delta})$ sampling algorithm simply takes $O(\log n)$ rounds.
\subsection{Fast Simulation of Distributed Metropolis-Hastings}
We would now like to estimate the partition function of a local Gibbs distribution.  As described earlier, we need to know the Hamiltonian of $k = \Tilde{O}(\frac{n}{\epsilon^2})$ samples from the Gibbs distribution.  This can be found by running $k$ instances of the distributed Metropolis-Hastings Markov chain.  We can simulate one transition of $n$ instances of the chain in parallel.

\mainresult*

\subsubsection{Algorithm}
\paragraph*{Initialization}Each chain starts at an initial labeling.  At time step $t$, each vertex will hold $X_{v,i,t}$, the label of vertex $v$ in chain $i$.  It will also sample its new proposed label, $\sigma_{v,i,t}$.  In the case the vertex is inactive in a chain, $\sigma_{v,i,t} = \sqcup$.

To decide if $X_{v,i,t+1} \gets \sigma_{v,i,t}$, assuming $\sigma_{v,i,t}\neq\sqcup$, $v$ needs to know if any of its incident edges are rejected in chain $i$.  We will use the technique from Dolev, Lenzen, and Peled \cite{TriangleCounting} for triangle counting and also used by Censor-Hillel, Kaski, Korhonen, Lenzen, Paz, Suomela \cite{MatrixMult} for matrix multiplication over semirings.

For convenience, we will use $0$ based indexing.  Let $A$ be the adjacency matrix of the graph.  Consider an $n\times n\times n$ cube, $Q$, where $Q_{v,w,i}$ is true iff $A_{v,w} = 1$ and $\{v,w\}$ is not accepted in chain $i$.  Furthermore, think of $Q$ evenly divided into $n$, $n^{2/3}\times n^{2/3}\times n^{2/3}$ subcubes.  Let subcube $Q[x,y,z]$ refer to the subcube over the indices $[xn^{2/3}, (x+1)n^{2/3})\times [yn^{2/3}, (y+1)n^{2/3}) \times [zn^{2/3}, (z+1)n^{2/3})$.  We will use a similar notation for matrices, $A[x,y]$ refers to the submatrix over the indices $[xn^{2/3}, (x+1)n^{2/3})\times [yn^{2/3}, (y+1)n^{2/3})$.  

\paragraph*{Computing $Q$}Each subcube is arbitrarily assigned to a machine that will be responsible for generating the entries of the subcube.  Because edges are non-directional, we assign the subcubes $Q[x,y,z]$ and $Q[y,x,z]$ to the same machine.  Each machine is assigned at most two subcubes.
    
Consider the machine $\sf{M}$ holding the subcubes $Q[x,y,z]$ and $Q[y,x,z]$.  To fill in the subcubes, $\sf{M}$ needs to hold $A[x,y]$, $\sigma[x,z]$, $\sigma[y,z]$, $X[x,z]$, and $X[y,z]$.  Altogether, $\sf{M}$ needs to receive $\Theta(n^{4/3})$ words.  For every machine to do this step in parallel, each machine needs to send and receive $\Theta(n^{4/3})$ words.  This communication is done in $\Theta(n^{1/3})$ rounds, using Lenzen's result \cite{LenzenRouting}.  $\sf{M}$ then uses the information it received in addition to its randomness to fill in the (possibly two) subcube(s) it is responsible for.  More precisely, $\sf{M}$ runs the edge acceptance step of the Markov chain for the edge-chain pairs it is responsible for and fills in the subcube(s).

\paragraph*{Updating Labels}
$\sf{M}$ will compress each subcube it holds into a submatrix by applying union along the $y$-axis.  In particular, the subcube $Q[x,y,z]$ is flattened into the submatrix $R^y[x,z]$, where $R^y[x,z]_{i,k} = \text{true}$ if there exists $j$ such that $Q[x,y,z]_{i,j,k} = \text{true}$.

Finally, vertex $u$ accepts its proposal in chain $i$ if $\lor_y R^y_{u,i} = \text{false}$, as this indicates that no incident edge to $u$ was rejected.  To compute this for every chain, $u$ receives $R^y_{u,i}$ for all $y$ and $i$.  Doing this simultaneously for all vertices, requires each machine to send and receive $O(n^{4/3})$ words of information.  This takes a final $O(n^{1/3})$ rounds.

\begin{corollary}With success probability at least $\frac{3}{4}$, we can approximate the partition function, $Z(\beta)$, of a local Gibbs distribution with multiplicative error $\epsilon$ in $\Tilde{O}(\frac{n^{1/3}}{\epsilon^2})$ rounds, when $Z(\beta')$ is known for some $\beta'$ and the associated distributed Metropolis-Hastings chain has an $O(\log n)$ mixing time for inverse temperatures in the interval $[\min(\beta,\beta'), \max(\beta,\beta')]$.
\end{corollary}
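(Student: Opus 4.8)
The plan is to compose the parallel counting-to-sampling reduction of \cref{simulatedAnnealingTheorem} with the batched chain simulation of \cref{simulationTheorem}. By \cref{simulatedAnnealingTheorem}, estimating $Z(\beta)$ reduces to estimating the ratio of partition functions between $\beta$ and the reference temperature $\beta'$ at which $Z(\beta')$ is already known: if $\beta \ge \beta'$, invoke \cref{simulatedAnnealingTheorem} with $\beta_1=\beta'$ and $\beta_2=\beta$ to obtain a $(1\pm\epsilon)$-estimate of $Z(\beta)/Z(\beta')$ and multiply by $Z(\beta')$; if $\beta<\beta'$ (as for the hardcore model, where $\beta'=\infty$ and $Z(\infty)=1$), estimate $Z(\beta')/Z(\beta)$ instead, run the estimator with accuracy $\epsilon/2$, and divide $Z(\beta')$ by the result, which still yields multiplicative error $\epsilon$ since we may assume $\epsilon\le 1/2$. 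Either way, \cref{simulatedAnnealingTheorem} leaves the task of producing the Hamiltonians of $k=\Tilde{O}(n/\epsilon^2)$ samples from the Gibbs distribution whose inverse temperatures all lie in $[\min(\beta,\beta'),\max(\beta,\beta')]$ --- exactly the interval in which we assumed $O(\log n)$ mixing --- and which are, moreover, precomputable and hence known to every machine without communication.

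I would generate these samples by running the distributed Metropolis-Hastings chain in $\lceil k/n\rceil = \Tilde{O}(1/\epsilon^2)$ batches of at most $n$ chains each. A single labeling $\sigma_0\in\mathcal{S}$ is constructed once, in $O(\log n)$ rounds, via condition~3 of the definition of a local Gibbs distribution, and every chain in every batch starts from $\sigma_0$; this is sound because a Markov chain mixes regardless of its initial state, so there is no need to build $n$ distinct starting states per batch. Set the per-sample total variation error to $\delta=1/(8k)$; since we may assume $\epsilon>1/\sqrt n$ (otherwise gathering the graph at one machine and computing $Z$ exactly in $O(n)$ rounds already beats the claimed bound), we have $k=\mathrm{poly}(n)$ and $\delta=1/\mathrm{poly}(n)$, so the assumed $O(\log(n/\delta))$ mixing time amounts to $O(\log n)$ transitions per chain. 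Each transition of all $n$ chains of a batch costs $O(n^{1/3})$ rounds by \cref{simulationTheorem}. Once a batch has mixed, the Hamiltonian of each of its $n$ labelings is assembled from the per-vertex and per-edge subfunctions of condition~2 by reusing the cube/block decomposition and Lenzen routing of \cref{simulationTheorem}: the machine owning a block of the adjacency matrix together with the matching blocks of current labels forms, for each chain it handles, the partial sum of edge subfunctions over that block, and these partial sums are then aggregated with the per-vertex contributions to the chain owners within the same $O(n^{1/3})$-round budget. Hence each batch costs $O(n^{1/3}\log n)$ rounds, all $\Tilde{O}(1/\epsilon^2)$ batches cost $\Tilde{O}(n^{1/3}/\epsilon^2)$ rounds, and the averaging and product computation inside \cref{simulatedAnnealingTheorem} is a cheap aggregation absorbed into this bound.

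It remains to track the failure probability. Fed exact samples, \cref{simulatedAnnealingTheorem} is correct with probability at least $3/4$; replacing each of the $k$ samples by one within total variation distance $\delta=1/(8k)$ loses at most $k\delta=1/8$ by a coupling argument, so the whole procedure succeeds with probability at least $5/8$, and taking the median over a constant number of independent repetitions boosts this to at least $3/4$ without affecting the $\Tilde{O}(n^{1/3}/\epsilon^2)$ bound. I expect the only genuinely delicate point to be bookkeeping rather than a new idea: one must verify that \emph{everything} fits inside $O(n^{1/3})$ rounds per batch --- the $\delta=1/(8k)$ choice keeping the mixing length at $O(\log n)$, the shared $O(\log n)$ initialization, and above all the aggregation of $n$ Hamiltonians, which must exploit the cube decomposition of \cref{simulationTheorem} rather than naively having each vertex collect its neighbors' labels across all $n$ chains (which would cost $\Theta(n)$ rounds) --- and that the multiplicative errors compose correctly when the reference temperature $\beta'$ lies above $\beta$.
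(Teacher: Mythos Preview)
Your proposal is correct and follows essentially the same route as the paper: combine \cref{simulatedAnnealingTheorem} with \cref{simulationTheorem}, run the $\Tilde{O}(n/\epsilon^2)$ chains in batches of $n$, and reuse the cube decomposition (with addition in place of union) to aggregate each sample's Hamiltonian in $O(n^{1/3})$ rounds. You are in fact more explicit than the paper about the $\beta<\beta'$ case, the batching count, and the $\delta=1/(8k)$ precision-and-median-trick accounting, all of which the paper relegates to surrounding text; the one detail the paper spells out that you leave implicit is the factor $\tfrac{1}{2}$ when summing per-vertex edge contributions so that each undirected edge is counted once.
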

\begin{proof}
    \cref{simulatedAnnealingTheorem} and \cref{simulationTheorem} almost immediately give us the result.  We draw $\Tilde{O}(\frac{n}{\epsilon^2})$ samples from the Gibbs distribution by simulating $O(\log n)$ transitions of the Markov chain.  There is one slight problem, the samples are held in a distributed fashion and we need the Hamiltonian of each sample held centrally at some machine.

    We use a very similar strategy as before, with a slight change.  We now let $Q_{v,w,i}$ denote $H_{(v,w)}(X_{v,i},X_{w,i})$.  We use the same subcube division scheme and compute this for every subcube.  However, now when we compress a subcube into a matrix we use addition instead of union.  Then, the sum of edge Hamiltonian subfunctions incident to $u$ in chain $i$, is $\sum_y R^y_{u,i}$.  
    For each chain $i$, $u$ computes its portion of the overall Hamiltonian, $H^{u,i} = H_u(X_{u,i}) + \frac{1}{2}\sum_y R^y_{u,i}$.  Then $u$ can send $H^{u,i}$ to machine $i$.  Machine $i$ can compute the entire Hamiltonian for chain $i$, $\sum_{u}H^{u,i}$.  Lastly, all such Hamiltonian values can be gathered at a singular central machine.
\end{proof}

\colorcount*

In the special case of the hardcore model, we can do better than the general simulation result.  
\begin{theorem}We can draw $n$ samples from the distributed Metropolis-Hastings chain for the hardcore model in $\Tilde{\Theta}(1)$ rounds, when $\lambda \leq \frac{\alpha}{\Delta-1}$ for $\alpha < 1$, with probability at least $1-\frac{1}{n^2}$.
\end{theorem}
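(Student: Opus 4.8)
The plan is to exploit the fact that, in the hardcore model, almost every vertex is ``inert'' in the sense that it neither lies in the current independent set nor proposes to join it, and that such a vertex imposes no constraint at all. Call a vertex $v$ \emph{relevant} in chain $i$ if $X_{v,i}=1$, or if $v$ is active with proposal $\sigma_{v,i}=1$. Inspecting the edge acceptance rule with $\max A_e=1$, the edge $(u,v)$ is rejected only when at least one endpoint is relevant; in particular, if $u$ is not relevant in chain $i$ then $(u,v)$ is accepted no matter what $v$ does, so a non-relevant neighbor never blocks an update. Hence one transition of all $n$ chains can be realized by a single communication phase: each vertex that is relevant in chain $i$ sends the pair $(X_{v,i},\sigma_{v,i})$, tagged with $i$, to each of its $\deg(v)$ neighbors; afterwards every vertex knows, for every chain, its list of relevant neighbors, and decides its own update locally. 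All the per-vertex sampling of activity and proposals is free. Iterating this for the $T=O(\log n)$ steps guaranteed by the hardcore mixing theorem (with $\delta$ taken polynomially small as in the precision discussion) produces the $n$ samples.

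The second step is a concentration bound on the resulting communication. Assume $\Delta\ge 2$, the cases $\Delta\le 1$ being trivial; then $\lambda\le\frac{\alpha}{\Delta-1}$ with $\alpha<1$ gives $\lambda\Delta\le\frac{\alpha\Delta}{\Delta-1}\le 2\alpha=O(1)$. Start every chain from the empty independent set. A vertex is active with proposal $1$ with probability at most $\frac{p\lambda}{1+\lambda}\le\lambda$, independently across vertices and chains; and, because the chain starts empty, $X_{v,i}=1$ after step $t$ forces $v$ to have been active with proposal $1$ at some earlier step, so a union bound over those $t\le T$ steps gives $\Pr[v\text{ relevant in chain }i]=O(T\lambda)$, still independently over the $n$ chains. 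A Chernoff bound over the $n$ independent chains then shows that, with probability $1-n^{-\Omega(1)}$, \emph{every} vertex is relevant in at most $O(T\lambda n+\log n)=O\big(\tfrac{n\log n}{\Delta}+\log n\big)$ of the $n$ chains, and a union bound over all $O(n\log n)$ $(\text{vertex},\text{step})$ pairs pushes the total failure probability below $\frac{1}{n^{2}}$.

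Finally I would assemble these. In one transition a vertex sends at most $\deg(v)\le\Delta$ words per chain in which it is relevant, a total of $O\big(\Delta\cdot(\tfrac{n\log n}{\Delta}+\log n)\big)=O(n\log n)$ words, and it receives at most $\sum_{u\in N(v)}\#\{i:u\text{ relevant in chain }i\}\le\Delta\cdot O\big(\tfrac{n\log n}{\Delta}+\log n\big)=O(n\log n)$ words; by Lenzen's routing the transition finishes in $O(\log n)$ rounds, so all $T=O(\log n)$ transitions take $O(\mathrm{polylog}\,n)=\Tilde{\Theta}(1)$ rounds. I expect the main obstacle to be exactly this last balancing step: bounding the \emph{total} communication is easy, but one must bound the load at each individual machine, and a priori a single high-degree vertex could receive messages from $\Delta$ neighbors each relevant in $\Theta(\lambda n)$ chains. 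It is precisely the cancellation $\Delta\cdot\lambda n=O(n)$, forced by the hypothesis $\lambda\le\frac{\alpha}{\Delta-1}$, that keeps every machine's load at $O(n\,\mathrm{polylog}\,n)$ and hence the whole simulation at $\Tilde{\Theta}(1)$ rounds.
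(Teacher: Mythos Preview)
Your proposal is correct and follows essentially the same approach as the paper: start every chain from the empty independent set, observe that a vertex need only communicate in chains where it has or proposes the label $1$, bound this probability by $\Tilde{O}(\lambda)=\Tilde{O}(1/\Delta)$, apply Chernoff across the $n$ independent chains, and use the $\Delta\cdot\lambda=O(1)$ cancellation together with Lenzen's routing to conclude that each machine sends and receives $\Tilde{O}(n)$ words per transition. The only cosmetic difference is that the paper packages the argument via the single indicator ``$v$ ever proposes $1$ in chain $i$'' and bounds total communication over all $T$ transitions at once, whereas you analyze relevance per step and then union-bound over the $O(n\log n)$ vertex--step pairs; the two bookkeeping choices are equivalent.
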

\begin{proof}
   For the hardcore model, we can take advantage of the fact that there are only two possible labels.  Imagine each chain starting from the empty independent set.  For one vertex to update its label in a transition, it only needs to know which of its neighbors have the current label $1$ and which of its neighbors are proposing the label $1$.

   Consider some vertex $v$.  We will first bound the total messages $v$ needs to send.  Let $X_{v,i}$ indicate whether $v$ ever proposes $1$ in chain $i$ up to the mixing time.  Note that $E[X_{v,i}]=\Tilde{O}(\frac{1}{\Delta})$.  Note that the total number of messages $v$ ever needs to send each of its neighbors is bounded by $\Tilde{O}(\sum_i X_{v,i})$.  By the Chernoff bound, this is $\Tilde{O}(n\frac{1}{\Delta})$ with probability at least $1-\frac{1}{2n^3}$.  Since these messages need to be sent to all of v's neighbors, we get that $v$ needs to send a total of $\Tilde{O}(n)$ messages with probability at least $1-\frac{1}{2n^3}$.  Taking the union bound over all vertices, no vertex sends more than $\Tilde{O}(n)$ messages with probability at least $1-\frac{1}{2n^2}$.

   Now we will bound the number of messages $v$ ever receives.  Similarly, this is bounded by $\Tilde{O}(\sum_i\sum_{u\in N(v)} X_{u,i})$.  Applying the Chernoff and union bounds again, no vertex sends more than $\Tilde{O}(n)$ messages with probability at least $1-\frac{1}{2n^2}$.

   Altogether, with probability at least $1-\frac{1}{n^2}$, no vertex ever sends or receives more than $\Tilde{O}(n)$ messages.  By the result of Lenzen \cite{LenzenRouting}, each transition of all $n$ chains can be simulated in $\Tilde{O}(1)$ rounds.
\end{proof}

\hardcorecounting*

\subsection{Lower Bound}
It is not surprising that conditional lower bounds exist for estimating the partition function of a local Gibbs distribution in the \cc{}, since for many distributions, such as the hardcore model with high fugacity, it is known to be NP-hard.  However, we can show that even in the case where the distributed Metropolis-Hastings chain has mixing time $O(\log n)$, we still have conditional hardness results.

Consider the following Gibbs distribution.  We choose $\mathcal{A} = V\cup [3n]$.  We restrict the set $\mathcal{S}$ to be all labelings such that every vertex's label is either in $[3n]$ or is a neighboring vertex.  We let the Hamiltonian count the number of edges $\{u,v\}$ such that the labels of $u$ and $v$ are the same vertex.  It is not hard to show that this is a local Gibbs distribution and that the mixing time of the distributed Metropolis-Hastings chain is $O(\log\frac{n}{\delta})$ for $\beta\geq 0$.  The mixing time easily follows because the chain has even more degrees of freedom than for $q = 3\Delta$ colorings.

As always, when $\beta = 0$, all labelings in $\mathcal{S}$ are equally likely.  $Z(0) = \prod_v 3n+\deg(v)$.  On the other hand, when $\beta=\infty$, we remove labelings where two neighboring vertices are labeled the same mutually neighboring vertex.  This can only happen if there is a triangle in the graph.  Thus, $Z(0) > Z(\infty)$ if and only if there is a triangle in the graph.  In fact, if the graph has at least one triangle, $a,b,c$, then $1 - \frac{Z(\infty)}{Z(0)} \geq \frac{1}{16n^2}$, since that is a lower bound on the probability that vertices $a$ and $b$ both choose $c$ in a uniform random labeling.  Rearranging, $Z(\infty) \leq Z(0) (1 - \frac{1}{16n^2}) \leq Z(0) (1 - \frac{1}{32n^2})^2$.

Note that we can easily compute $Z(0)$ exactly.  Now we can choose $\epsilon = \frac{1}{32n^2}$ and take an approximate count of $Z(\infty)$, $\bar{Z}$.  We have two cases to consider.  In the first, the graph doesn't have a triangle.  Then, $Z(\infty) = Z(0)$.  Thus, $\bar{Z} \geq (1-\frac{1}{32n^2})Z(0)$.  In the other case, the graph has a triangle.  Then, $\bar{Z} \leq (1+\frac{1}{32n^2})Z(\infty)\leq (1 - \frac{1}{32n^2})^2(1+\frac{1}{32n^2})Z(0) < (1-\frac{1}{32n^2})Z(0)$.  Thus, we can use $\bar{Z}$ to determine if the graph has a triangle.  This example leads to the following theorem.

\begin{theorem}\label{lowerBound}Approximating the partition function of a local Gibbs distribution when $\epsilon \leq \frac{1}{32n^2}$ with success probability $p$ is as hard as triangle detection with success probability $p$.  This holds even when $Z(0)$ is known and the distributed Metropolis-Hastings chain is fast mixing over all $\beta\geq 0$.
\end{theorem}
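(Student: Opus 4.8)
The plan is to exhibit a single gadget family of local Gibbs distributions --- one per input graph $G$ --- for which distinguishing $Z(0)$ from $Z(\infty)$ via an $\epsilon$-multiplicative estimate with $\epsilon \le \tfrac{1}{32n^2}$ is equivalent to deciding whether $G$ contains a triangle. Given $G$, take $\mathcal{A} = V \cup [3n]$ and let $\mathcal{S}$ consist of all labelings in which every vertex's label is either a ``color'' from $[3n]$ or one of its own neighbors; let $H$ count the edges $\{u,v\}$ whose two endpoints carry the same label and whose common label lies in $V$. First I would check that this is a local Gibbs distribution: it is a pairwise Markov random field (the $b_v$ encode the per-vertex admissible labels, $A_e$ forbids---and, at finite $\beta$, reweights---the offending coincidences), $H$ decomposes as $H(\sigma) = \sum_v 0 + \sum_{e=(u,v)} \mathbb{1}[l(u)=l(v)\in V]$ with each subfunction having codomain $\{0,1\} \subseteq [h]$, and a feasible element of $\mathcal{S}$ is produced with zero communication by having each vertex pick any color in $[3n]$. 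I would also verify that the distributed Metropolis--Hastings chain mixes in $O(\log\tfrac n\delta)$ transitions for every $\beta \ge 0$: each vertex always has at least $3n \ge 3\Delta$ admissible labels, so the coupling of \cite{FengHayesYin} behind the Potts mixing theorem applies with $\alpha = 3 > 2$, the worst case being $\beta = \infty$.

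Next I would pin down the combinatorics. At $\beta = 0$ every element of $\mathcal{S}$ is equally likely, so $Z(0) = \prod_v (3n + \deg v)$, which every machine can learn exactly in $O(1)$ rounds. At $\beta = \infty$ only labelings with $H = 0$ survive, so $Z(\infty) = |\{\sigma \in \mathcal{S} : H(\sigma) = 0\}|$. The key observation is that $H(\sigma) > 0$ for some feasible $\sigma$ exactly when $G$ has a triangle: if adjacent $u,v$ both label themselves with a common vertex $w$, then $w \in N(u)\cap N(v)$ and $\{u,v,w\}$ is a triangle; conversely a triangle $\{a,b,c\}$ makes the labeling $a\mapsto c$, $b\mapsto c$ feasible with a monochromatic edge. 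Hence $Z(\infty) = Z(0)$ iff $G$ is triangle-free, and when a triangle $\{a,b,c\}$ exists, lower-bounding by the event that $a$ and $b$ both choose $c$ under the uniform distribution on $\mathcal{S}$ gives $1 - \tfrac{Z(\infty)}{Z(0)} \ge \tfrac{1}{(3n+\deg a)(3n+\deg b)} \ge \tfrac{1}{16n^2}$ (since $3n+\deg v < 4n$), so $Z(\infty) \le (1 - \tfrac{1}{32n^2})^2 Z(0)$.

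Finally I would assemble the reduction. Given a triangle-detection instance $G$, run the hypothesized counting algorithm on the gadget at $\beta = \infty$ with accuracy $\epsilon = \tfrac1{32n^2}$ to obtain $\bar Z$ with $(1-\epsilon)Z(\infty) \le \bar Z \le (1+\epsilon)Z(\infty)$ with probability $p$, compute $Z(0)$ exactly, and output ``triangle-free'' iff $\bar Z \ge (1 - \tfrac1{32n^2}) Z(0)$. If $G$ is triangle-free then $Z(\infty) = Z(0)$, so $\bar Z \ge (1-\epsilon)Z(0)$ and the test accepts; if $G$ has a triangle then $\bar Z \le (1+\tfrac1{32n^2})(1-\tfrac1{32n^2})^2 Z(0) < (1 - \tfrac1{32n^2})Z(0)$ and the test rejects --- in both cases correctly with probability $p$, since the exact computation of $Z(0)$ contributes no failure probability. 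An algorithm attaining any $\epsilon \le \tfrac1{32n^2}$ is at least this accurate, so the same threshold test works, establishing the stated conditional hardness even under the known-$Z(0)$ and fast-mixing restrictions. I expect the only content beyond routine calculation to be confirming fast mixing of the gadget chain uniformly in $\beta$; the remaining steps are direct, and the gadget's alphabet size $4n$ is polynomial as required.
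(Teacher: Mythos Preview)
Your proposal is correct and follows essentially the same construction and argument as the paper: the identical gadget distribution with $\mathcal{A}=V\cup[3n]$, the same computation of $Z(0)$, the same $\tfrac{1}{16n^2}$ gap when a triangle is present, and the same threshold test on $\bar Z$ against $(1-\tfrac{1}{32n^2})Z(0)$. You are slightly more explicit than the paper in verifying the three locality conditions and in spelling out the triangle $\Leftrightarrow$ $Z(\infty)<Z(0)$ equivalence, but the route is the same.
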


Note that currently the best algorithm for triangle detection uses matrix multiplication and takes $O(n^{0.157})$ rounds \cite{MatrixMult,NewMatrixMult}.

\section{Conclusion}
We present a fast algorithm for approximating the partition function of a local Gibbs distribution in the \cc{}, by utilizing the parallel counting to sampling reduction of \cite{ParGibCount}.  In particular, their result requires $\Tilde{O}(\frac{n}{\epsilon^2})$ samples to obtain a count.  We can take $n$ samples in parallel in the \cc{} by observing an interesting similarity to triangle counting and semiring matrix multiplication\cite{MatrixMult,TriangleCounting}.  For some models, such as the hardcore model at low fugacity, we can do even better by bounding the total number of messages each vertex needs to send.  We also obtain the conditional lower bound that approximating the partition function of a local Gibbs distribution is at least as hard as triangle detection when $\epsilon\leq\frac{1}{32n^2}$.  The gap between this lower bound and our result suggests two possible areas of improvement: reducing the runtime dependence of our algorithm on $\epsilon$ and reducing the runtime needed to take $n$ samples.  Reducing the runtime dependence on $\epsilon$ seems difficult because even the state of the art sequential algorithms require $\frac{n}{\epsilon^2}$ samples.  On the other hand, due to structural similarities between the problems, taking $n$ samples in integer matrix multiplication time may be possible, however this remains unclear when integers are limited to $O(\log n)$ bits.

\bibliography{refs}

\begin{thebibliography}{10}

\bibitem{SimAnneal1}
Ivona Bez\'{a}kov\'{a}, Daniel \v{S}tefankovi\v{c}, Vijay~V. Vazirani, and Eric Vigoda.
\newblock Accelerating simulated annealing for the permanent and combinatorial counting problems.
\newblock In {\em Proceedings of the Seventeenth Annual ACM-SIAM Symposium on Discrete Algorithm}, SODA '06, page 900–907, USA, 2006. Society for Industrial and Applied Mathematics.

\bibitem{distFlipDynamics}
Charlie Carlson, Daniel Frishberg, and Eric Vigoda.
\newblock {Improved Distributed Algorithms for Random Colorings}.
\newblock In Alysson Bessani, Xavier D\'{e}fago, Junya Nakamura, Koichi Wada, and Yukiko Yamauchi, editors, {\em 27th International Conference on Principles of Distributed Systems (OPODIS 2023)}, volume 286 of {\em Leibniz International Proceedings in Informatics (LIPIcs)}, pages 13:1--13:18, Dagstuhl, Germany, 2024. Schloss Dagstuhl -- Leibniz-Zentrum f{\"u}r Informatik.
\newblock URL: \url{https://drops.dagstuhl.de/entities/document/10.4230/LIPIcs.OPODIS.2023.13}, \href {https://doi.org/10.4230/LIPIcs.OPODIS.2023.13} {\path{doi:10.4230/LIPIcs.OPODIS.2023.13}}.

\bibitem{MatrixMult}
Keren Censor-Hillel, Petteri Kaski, Janne~H. Korhonen, Christoph Lenzen, Ami Paz, and Jukka Suomela.
\newblock Algebraic methods in the congested clique.
\newblock In {\em Proceedings of the 2015 ACM Symposium on Principles of Distributed Computing}, PODC '15, page 143–152, New York, NY, USA, 2015. Association for Computing Machinery.
\newblock \href {https://doi.org/10.1145/2767386.2767414} {\path{doi:10.1145/2767386.2767414}}.

\bibitem{fastColoring}
Yi-Jun Chang, Manuela Fischer, Mohsen Ghaffari, Jara Uitto, and Yufan Zheng.
\newblock The complexity of ({$\Delta$}+1) coloring in congested clique, massively parallel computation, and centralized local computation.
\newblock In {\em Proceedings of the 2019 ACM Symposium on Principles of Distributed Computing}, PODC '19, page 471–480, New York, NY, USA, 2019. Association for Computing Machinery.
\newblock \href {https://doi.org/10.1145/3293611.3331607} {\path{doi:10.1145/3293611.3331607}}.

\bibitem{distSpanningTree}
Atish Das~Sarma, Danupon Nanongkai, Gopal Pandurangan, and Prasad Tetali.
\newblock Distributed random walks.
\newblock {\em J. ACM}, 60(1), February 2013.
\newblock \href {https://doi.org/10.1145/2432622.2432624} {\path{doi:10.1145/2432622.2432624}}.

\bibitem{TriangleCounting}
Danny Dolev, Christoph Lenzen, and Shir Peled.
\newblock "tri, tri again": finding triangles and small subgraphs in a distributed setting.
\newblock In {\em Proceedings of the 26th International Conference on Distributed Computing}, DISC'12, page 195–209, Berlin, Heidelberg, 2012. Springer-Verlag.
\newblock \href {https://doi.org/10.1007/978-3-642-33651-5_14} {\path{doi:10.1007/978-3-642-33651-5_14}}.

\bibitem{DyerFrieze}
Martin Dyer and Alan Frieze.
\newblock Computing the volume of convex bodies: A case where randomness provably helps.
\newblock {\em Probabilistic combinatorics and its applications}, 1991.

\bibitem{FengHayesYin}
Weiming Feng, Thomas~P. Hayes, and Yitong Yin.
\newblock Distributed symmetry breaking in sampling (optimal distributed randomly coloring with fewer colors), 2018.
\newblock URL: \url{https://arxiv.org/abs/1802.06953}, \href {https://arxiv.org/abs/1802.06953} {\path{arXiv:1802.06953}}.

\bibitem{simMetropolis}
Weiming Feng, Thomas~P. Hayes, and Yitong Yin.
\newblock Distributed metropolis sampler with optimal parallelism.
\newblock In {\em Proceedings of the Thirty-Second Annual ACM-SIAM Symposium on Discrete Algorithms}, SODA '21, page 2121–2140, USA, 2021. Society for Industrial and Applied Mathematics.

\bibitem{WhatCanBeSampled}
Weiming Feng, Yuxin Sun, and Yitong Yin.
\newblock What can be sampled locally?
\newblock In {\em Proceedings of the ACM Symposium on Principles of Distributed Computing}, PODC '17, page 121–130, New York, NY, USA, 2017. Association for Computing Machinery.
\newblock \href {https://doi.org/10.1145/3087801.3087815} {\path{doi:10.1145/3087801.3087815}}.

\bibitem{distJVV}
Weiming Feng and Yitong Yin.
\newblock On local distributed sampling and counting.
\newblock In {\em Proceedings of the 2018 ACM Symposium on Principles of Distributed Computing}, PODC '18, page 189–198, New York, NY, USA, 2018. Association for Computing Machinery.
\newblock \href {https://doi.org/10.1145/3212734.3212757} {\path{doi:10.1145/3212734.3212757}}.

\bibitem{FischerGhaffari}
Manuela Fischer and Mohsen Ghaffari.
\newblock {A Simple Parallel and Distributed Sampling Technique: Local Glauber Dynamics}.
\newblock In Ulrich Schmid and Josef Widder, editors, {\em 32nd International Symposium on Distributed Computing (DISC 2018)}, volume 121 of {\em Leibniz International Proceedings in Informatics (LIPIcs)}, pages 26:1--26:11, Dagstuhl, Germany, 2018. Schloss Dagstuhl -- Leibniz-Zentrum f{\"u}r Informatik.
\newblock URL: \url{https://drops.dagstuhl.de/entities/document/10.4230/LIPIcs.DISC.2018.26}, \href {https://doi.org/10.4230/LIPIcs.DISC.2018.26} {\path{doi:10.4230/LIPIcs.DISC.2018.26}}.

\bibitem{ImageProcessing}
Stuart Geman and Donald Geman.
\newblock Stochastic relaxation, gibbs distributions, and the bayesian restoration of images.
\newblock {\em IEEE Transactions on Pattern Analysis and Machine Intelligence}, PAMI-6(6):721--741, 1984.
\newblock \href {https://doi.org/10.1109/TPAMI.1984.4767596} {\path{doi:10.1109/TPAMI.1984.4767596}}.

\bibitem{LLLsampling}
Heng Guo, Mark Jerrum, and Jingcheng Liu.
\newblock Uniform sampling through the lov\'{a}sz local lemma.
\newblock {\em J. ACM}, 66(3), April 2019.
\newblock \href {https://doi.org/10.1145/3310131} {\path{doi:10.1145/3310131}}.

\bibitem{HPPE}
Mark Huber.
\newblock Approximation algorithms for the normalizing constant of gibbs distributions.
\newblock {\em The Annals of Applied Probability}, 25(2):974--985, 2015.
\newblock URL: \url{http://www.jstor.org/stable/24519939}.

\bibitem{EdgeRemoveCounting}
Mark Jerrum.
\newblock A very simple algorithm for estimating the number of k‐colorings of a low‐degree graph.
\newblock {\em Random Struct. Algorithms}, 7(2):157–165, September 1995.

\bibitem{JVV}
Mark~R. Jerrum, Leslie~G. Valiant, and Vijay~V. Vazirani.
\newblock Random generation of combinatorial structures from a uniform distribution.
\newblock {\em Theoretical Computer Science}, 43:169--188, 1986.
\newblock URL: \url{https://www.sciencedirect.com/science/article/pii/030439758690174X}, \href {https://doi.org/10.1016/0304-3975(86)90174-X} {\path{doi:10.1016/0304-3975(86)90174-X}}.

\bibitem{fastMST}
Tomasz Jurdzi\'{n}ski and Krzysztof Nowicki.
\newblock Mst in o(1) rounds of congested clique.
\newblock In {\em Proceedings of the Twenty-Ninth Annual ACM-SIAM Symposium on Discrete Algorithms}, SODA '18, page 2620–2632, USA, 2018. Society for Industrial and Applied Mathematics.

\bibitem{LenzenRouting}
Christoph Lenzen.
\newblock Optimal deterministic routing and sorting on the congested clique.
\newblock In {\em Proceedings of the 2013 ACM Symposium on Principles of Distributed Computing}, PODC '13, page 42–50, New York, NY, USA, 2013. Association for Computing Machinery.
\newblock \href {https://doi.org/10.1145/2484239.2501983} {\path{doi:10.1145/2484239.2501983}}.

\bibitem{ParGibCount}
Hongyang Liu, Yitong Yin, and Yiyao Zhang.
\newblock Work-efficient parallel counting via sampling, 2024.
\newblock URL: \url{https://arxiv.org/abs/2408.09719}, \href {https://arxiv.org/abs/2408.09719} {\path{arXiv:2408.09719}}.

\bibitem{Lubys}
M~Luby.
\newblock A simple parallel algorithm for the maximal independent set problem.
\newblock In {\em Proceedings of the Seventeenth Annual ACM Symposium on Theory of Computing}, STOC '85, page 1–10, New York, NY, USA, 1985. Association for Computing Machinery.
\newblock \href {https://doi.org/10.1145/22145.22146} {\path{doi:10.1145/22145.22146}}.

\bibitem{ComputationalBiology}
J.~S. McCaskill.
\newblock The equilibrium partition function and base pair binding probabilities for rna secondary structure.
\newblock {\em Biopolymers}, 29(6-7):1105--1119, 1990.
\newblock URL: \url{https://onlinelibrary.wiley.com/doi/abs/10.1002/bip.360290621}, \href {https://arxiv.org/abs/https://onlinelibrary.wiley.com/doi/pdf/10.1002/bip.360290621} {\path{arXiv:https://onlinelibrary.wiley.com/doi/pdf/10.1002/bip.360290621}}, \href {https://doi.org/10.1002/bip.360290621} {\path{doi:10.1002/bip.360290621}}.

\bibitem{ccSpanningTree}
Sriram~V. Pemmaraju, Sourya Roy, and Joshua~Z. Sobel.
\newblock Sublinear-time sampling of spanning trees in the congested clique, 2024.
\newblock URL: \url{https://arxiv.org/abs/2411.13334}, \href {https://arxiv.org/abs/2411.13334} {\path{arXiv:2411.13334}}.

\bibitem{ExactDistSampl}
Sriram~V. Pemmaraju and Joshua~Z. Sobel.
\newblock Exact distributed sampling.
\newblock In {\em Structural Information and Communication Complexity: 30th International Colloquium, SIROCCO 2023, Alcal\'{a} de Henares, Spain, June 6–9, 2023, Proceedings}, page 558–575, Berlin, Heidelberg, 2023. Springer-Verlag.
\newblock \href {https://doi.org/10.1007/978-3-031-32733-9_25} {\path{doi:10.1007/978-3-031-32733-9_25}}.

\bibitem{MachineLearning}
Ruslan Salakhutdinov and Iain Murray.
\newblock On the quantitative analysis of deep belief networks.
\newblock In {\em Proceedings of the 25th International Conference on Machine Learning}, ICML '08, page 872–879, New York, NY, USA, 2008. Association for Computing Machinery.
\newblock \href {https://doi.org/10.1145/1390156.1390266} {\path{doi:10.1145/1390156.1390266}}.

\bibitem{HardcoreThreshold}
Allan Sly.
\newblock Computational transition at the uniqueness threshold.
\newblock In {\em 2010 IEEE 51st Annual Symposium on Foundations of Computer Science}, pages 287--296, 2010.
\newblock \href {https://doi.org/10.1109/FOCS.2010.34} {\path{doi:10.1109/FOCS.2010.34}}.

\bibitem{ValiantSharpP}
L.G. Valiant.
\newblock The complexity of computing the permanent.
\newblock {\em Theoretical Computer Science}, 8(2):189--201, 1979.
\newblock URL: \url{https://www.sciencedirect.com/science/article/pii/0304397579900446}, \href {https://doi.org/10.1016/0304-3975(79)90044-6} {\path{doi:10.1016/0304-3975(79)90044-6}}.

\bibitem{SimAnneal2}
Daniel \v{S}tefankovi\v{c}, Santosh Vempala, and Eric Vigoda.
\newblock Adaptive simulated annealing: A near-optimal connection between sampling and counting.
\newblock {\em J. ACM}, 56(3), May 2009.
\newblock \href {https://doi.org/10.1145/1516512.1516520} {\path{doi:10.1145/1516512.1516520}}.

\bibitem{Weitz}
Dror Weitz.
\newblock Counting independent sets up to the tree threshold.
\newblock In {\em Proceedings of the Thirty-Eighth Annual ACM Symposium on Theory of Computing}, STOC '06, page 140–149, New York, NY, USA, 2006. Association for Computing Machinery.
\newblock \href {https://doi.org/10.1145/1132516.1132538} {\path{doi:10.1145/1132516.1132538}}.

\bibitem{NewMatrixMult}
Virginia~Vassilevska Williams, Yinzhan Xu, Zixuan Xu, and Renfei Zhou.
\newblock New bounds for matrix multiplication: from alpha to omega.
\newblock In {\em ACM-SIAM Symposium on Discrete Algorithms}, 2023.
\newblock URL: \url{https://api.semanticscholar.org/CorpusID:259937341}.

\end{thebibliography}
\appendix
\section{Fast Mixing for the Potts Model}
\mixingpotts*
\begin{proof}
For convenience, we will make the change of variable $\lambda = \exp(-\beta)$.  We use the same coupling from \cite{FengHayesYin}, with slight modification.  We first take a slightly different view of the Markov chain.  While the edges in the input graph are undirected, here we think of edge $\{u,v\}$ being replaced by two directed edges, $(u,v)$ and $(v,u)$.  Instead of having vertices first make proposals and then determining which edges are accepted, we will have each pair of edges, $(u,v)$ and $(v,u)$, independently flip three coins before the proposal process, where the probability of heads of each coin is $\lambda$.  In particular, the edges $(u,v)$ and $(v,u)$ flip the same 3 coins, but coin 1 for $(u,v)$ is coin 2 for $(v,u)$ and vice versa.  Coin 3 is the same for both $(u,v)$ and $(v,u)$.  The edge $(u,v)$ is then accepted after proposals are made if each of the following three conditions are met (for proposals that are null we consider the corresponding conditions to be met).
\begin{itemize}
    \item $X_u \neq \sigma_v$ or coin 1 is heads
    \item $X_v \neq \sigma_u$ or coin 2 is heads
    \item $\sigma_u \neq \sigma_v$ or coin 3 is heads
\end{itemize}
Note that $(u,v)$ and $(v,u)$ are either both accepted or both rejected.  Finally, like before, a vertex accepts its proposal if every incident edge is accepted.  It's easy to see this process is the same as the distributed Metropolis-Hastings chain for the Potts model.

We now define type 1 active edges to mean that coin 1 is tails.  We define type 2 active edges to mean that coin 3 is tails.  Here, $\Tilde{N}(v)$, is the neighborhood of $v$ along type 1 active edges, $\{w | (v,w) \text{ is a type 1 active edge}\}$.

Consider the following coupling of chains $X$ and $Y$ that differ at a single vertex $v'$, $X_{v'}$ is red and $Y_{v'}$ is blue.  We first flip the same coins in both chains.  The same vertices are then marked active in both chains.  The active vertices of $X$ then choose their proposed colors uniformly at random.  Now all active vertices that are not in $\Tilde{N}(v')$ propose the same colors in chain $Y$.  A vertex $u \in \Tilde{N}(v')$ proposes the same color as in chain $X$ except that the colors blue and red are swapped, unless it neighbors a vertex $w$ other than $v'$ and either
\begin{itemize}
    \item $(w,u)$ is a type 1 active edge and $X_w$ is red/blue
    \item $w\notin\Tilde{N}(v')$, $(w,u)$ is a type 2 active edge, and $\sigma_w$ is red/blue
\end{itemize}
then it proposes the same color as in $X$.  Now we analyze the coupling.

First consider a vertex $v$ in $V/(\Tilde{N}(v')\cup \{v'\})$.  $X_v = Y_v$ and assuming it is active $\sigma^X_v = \sigma^Y_v$.  Thus the chains can only disagree at this vertex after the transition if an incident edge $e = \{v,w\}$ is accepted in one chain and not the other.  This leads to three cases: $(v,w)$ is type 1 active and fails the first condition in exactly one chain, $(w,v)$ is type 1 active and fails the first condition in exactly one chain, or $e$ is type 2 active and fails the third condition in exactly one chain.  Now in the first case we must have $\sigma^X_w \neq \sigma^Y_w$ and further that either $\sigma^X_w = X_v$ or $\sigma^Y_w = X_v$.  Since $w's$ proposals can only differ if they are red or blue, this would imply $X_v$ is red or blue.  But this is impossible since $w$ can't have a flipped proposal if $(v,w)$ is type 1 active and $X_v$ is blue or red.  Now consider the second case, the edge $(w,v)$ is type $1$ active.  We know $X_w=Y_w$ since $v\notin\Tilde{N}(v')$.  Furthermore, $\sigma^X_v = \sigma^Y_v$.  Thus the condition either passes in both chains or fails in both chains.  Finally, we consider the case where $e$ is type 2 active.  This is impossible for a very similar reason as the first case.  Altogether, after the transition, vertex $v$ must be the same in both chains.

Next we consider a vertex $v$ in $\Tilde{N}(v')$.  Note that $v$ can only differ after the transition when it proposes red/blue.  Furthermore, closer analysis shows that regardless of whether $v$ is proposing consistently or flipped in the two chains, there is actually only a single color it can choose to cause a difference at the transition.  Thus the probability for each such vertex is at most $\frac{p}{q}$.

Finally, we consider the vertex $v'$.  We can say that it will not differ in the chains after the transition if several conditions hold.  First it is marked active.  Second, it proposes a color distinct from all of its neighbors current colors.  Third, each neighbor is either inactive or proposes a color that is not red/blue and not what $v'$ is proposing.  This all happens with probability at least $p(\frac{q-\Delta}{q})(1-p+p\frac{q-3}{q})^\Delta$

Altogether, the expected Hamming distance between the chains after the transition can be bounded by taking the sum over all vertices of the probabilities we just computed.  This gives an upper bound of $1-p(\frac{q-\Delta}{q})(1-p+p\frac{q-3}{q})^\Delta + \frac{\Delta p}{q}$, the exact same bound achieved in \cite{FengHayesYin}.  They show that for $\alpha>2$ and sufficiently small $p$ this is less than $1$, completing the proof.
\end{proof}
\end{document}